\newtheorem*{mainthm}{Main Theorem}
\newtheorem{theorem}{Theorem}
\newtheorem{lemma}[theorem]{Lemma}
\theoremstyle{remark}
\newtheorem{case}{Case}
\newtheorem{subcase}{Subcase}[case]
\newenvironment{enumeratedlemma}{\begin{enumorig}[label=\textup{(\arabic{theorem}.\arabic*)}, itemsep=1.5mm plus 1.5mm, topsep=1.5mm plus 1.5mm, leftmargin=*]}{\end{enumorig}}
\renewenvironment{itemize}{\begin{itemorig}[label=\textbullet, noitemsep, topsep=1.5mm plus 1.5mm, labelsep=.6em, labelindent=.4em, leftmargin=*]}{\end{itemorig}}
\newcounter{savedenumi}
\def\enumeratedlemmaintertext#1{
    \setcounter{savedenumi}{\value{enumi}}
    \endenumeratedlemma #1\enumeratedlemma
    \setcounter{enumi}{\value{savedenumi}}
}
\newcounter{enumeratedlemmaproofitem}
\newenvironment{enumeratedlemmaproofitem}[1]{
	\begingroup
	\def\pushQED##1{}
	
	\setcounter{case}{0}
	\stepcounter{enumeratedlemmaproofitem}
	\begin{proof}[Proof of \ref{#1}]
}{\end{proof}\endgroup}
\newbox\itemcasebox
\newdimen\itemcasedim
\def\itemcase#1#2{
    \setbox\itemcasebox\hbox{#1}
    \ifdim\itemcasedim<\wd\itemcasebox\itemcasedim\wd\itemcasebox\fi
    \expandafter\def\expandafter\itemizedcasesbody\expandafter
      {\itemizedcasesbody\item\leavevmode\hbox to\itemcasedim{#1\hfil}\quad\enspace #2}
}
\newenvironment{itemizedcases}{
    \itemcasedim 0pt
    \def\itemizedcasesbody{}
}{\begin{itemize}\itemizedcasesbody\end{itemize}}
\def\ff{\mathbf{f}}
\def\Raysign{\mathit{R}}
\def\LBRsign{\mathit{LB}}
\def\RBRsign{\mathit{RB}}
\def\BRsign{\mathit{B}}
\def\BRUsign{\mathit{\bar{B}}}
\def\Ray#1#2{\Raysign(#1;#2)}
\def\LBR#1#2{\LBRsign(#1;#2)}
\def\RBR#1#2{\RBRsign(#1;#2)}
\def\BR#1#2#3{\BRsign(#1;#2,#3)}
\def\BRU#1#2#3#4{\BRUsign(#1;#2,#3;#4)}
\def\abs#1{\lvert#1\rvert}
\let\le\leqslant
\let\ge\geqslant
\let\epsilon\varepsilon
\let\subset\subseteq
\def\set#1{\{#1\}}
\let\old@setaddresses\@setaddresses
\def\@setaddresses{\bgroup\parindent 0pt\let\scshape\relax\old@setaddresses\egroup}
\title{Outerplanar graph drawings with few slopes}
\author{Kolja Knauer}
\author{Piotr Micek}
\author{Bartosz Walczak}
\address[Kolja Knauer]{Laboratoire d'Informatique, de Robotique et de Micro\'electronique de Montpellier, France}
\email{kolja.knauer@gmail.com}
\address[Piotr Micek, Bartosz Walczak]{Theoretical Computer Science Department, Faculty of Mathematics and Computer Science, Jagiellonian University, Krak{\'o}w, Poland}
\email{micek@tcs.uj.edu.pl, walczak@tcs.uj.edu.pl}
\thanks{Journal version of this paper appeared in \emph{Comput.\ Geom.}, 47(5):614--624, 2014.}
\thanks{Preliminary version of this paper appeared in Joachim Gudmundsson, Juli\'an Mestre, and Taso Viglas, editors, \emph{Computing and Combinatorics (COCOON 2012)}, volume 7434 of \emph{Lecture Notes Comput.\ Sci.}, pages 323--334.\ Springer, Berlin, 2012.}
\thanks{Kolja Knauer was supported by DFG grant FE-340/8-1 under ESF EuroGIGA project GraDR and by ANR TEOMATRO grant ANR-10-BLAN 0207.}
\thanks{Piotr Micek was supported by Ministry of Science and Higher Education of Poland grant 884/N-ESF-EuroGIGA/10/2011/0 under ESF EuroGIGA project GraDR\@.}
\thanks{Bartosz Walczak was supported by Ministry of Science and Higher Education of Poland grant 884/N-ESF-EuroGIGA/10/2011/0 under ESF EuroGIGA project GraDR and by Swiss National Science Foundation grant 200020-144531.}
\begin{document}
\baselineskip 14pt

\begin{abstract}
We consider straight-line outerplanar drawings of outerplanar graphs in which a small number of distinct edge slopes are used, that is, the segments representing edges are parallel to a small number of directions. 
We prove that $\Delta-1$ edge slopes suffice for every outerplanar graph with maximum degree $\Delta\ge 4$. 
This improves on the previous bound of $O(\Delta^5)$, which was shown for planar partial $3$-trees, a superclass of outerplanar graphs. 
The bound is tight: for every $\Delta\ge 4$ there is an outerplanar graph with maximum degree $\Delta$ that requires at least $\Delta-1$ distinct edge slopes in an outerplanar straight-line drawing.
\end{abstract}

\maketitle

\section{Introduction}

A \emph{straight-line drawing} of a graph $G$ is a mapping of the vertices of $G$ into distinct points of the plane and of the edges of $G$ into straight-line segments connecting the points representing their end-vertices and passing through no other points representing vertices. 
If it leads to no confusion, in notation and terminology, we make no distinction between a vertex and the corresponding point, and between an edge and the corresponding segment. 
The \emph{slope} of an edge in a straight-line drawing is the family of all straight lines parallel to this edge. 
The \emph{slope number} of a graph $G$, a parameter introduced by Wade and Chu \cite{Wad-94}, is the smallest number $s$ such that there is a straight-line drawing of $G$ using $s$ slopes.

Since at most two edges at each vertex can use the same slope, $\lceil\frac{\Delta}{2}\rceil$ is a lower bound on the slope number of a graph with maximum degree $\Delta$. 
Dujmovi{\'c} and Wood \cite{Duj-04} asked whether the slope number can be bounded from above by a function of the maximum degree. 
This has been answered independently by Bar{\'a}t, Matou{\v{s}}ek and Wood \cite{Bar-06}, Pach and P{\'a}lv{\"o}lgyi \cite{Pac-06}, and Dujmovi{\'c}, Suderman and Wood \cite{Duj-07a} in the negative: graphs with maximum degree $5$ can have arbitrarily large slope number. 
On the other hand, Mukkamala and P{\'a}lv{\"o}lgyi \cite{Muk-12} proved that graphs with maximum degree $3$ have slope number at most $4$, improving earlier results of Keszegh, Pach, P{\'a}lv{\"o}lgyi and T{\'o}th \cite{Kes-08} and of Mukkamala and Szegedy \cite{Muk-09}. 
The question whether the slope number of graphs with maximum degree $4$ is bounded by a constant remains open. 

The situation looks different for \emph{planar} straight-line drawings, that is, straight-line drawings in which no two edges intersect in a point other than a common endpoint. 
It is well known that every planar graph admits a planar straight-line drawing \cite{Far-48,Koe-36,Wag-36}. 
The \emph{planar slope number} of a planar graph $G$ is the smallest number $s$ such that there is a planar straight-line drawing of $G$ using $s$ slopes. 
This parameter was first studied by Dujmovi{\'c}, Eppstein, Suderman and Wood \cite{Duj-07b} in relation to the number of vertices. 
They also asked whether the planar slope number of planar graphs is bounded in terms of the maximum degree. 
Jel{\'{\i}}nek, Jel{\'{\i}}nkov{\'a}, Kratochv{\'{\i}}l, Lidick{\'y}, Tesa{\v{r}} and Vysko{\v{c}}il \cite{Jel-13} gave an upper bound of $O(\Delta^5)$ for planar graphs of treewidth at most $3$. 
The problem has been solved in full generality by Keszegh, Pach and P{\'a}lv{\"o}lgyi \cite{Kes-13}, who showed (with a non-constructive proof) that the planar slope number is bounded from above by an exponential function of the maximum degree. 
It is still an open problem whether this can be improved to a polynomial upper bound. 

In the present paper, we consider drawings of outerplanar graphs. 
The above-mentioned result of Jel{\'{\i}}nek et al.\ implies that outerplanar graphs admit planar drawings with $O(\Delta^5)$ slopes, as they have treewidth at most $2$. 
A straight-line drawing of a graph $G$ is \emph{outerplanar} if it is planar and all vertices of $G$ lie on the outer face. 
The \emph{outerplanar slope number} of an outerplanar graph $G$ is the smallest number $s$ such that there is an outerplanar straight-line drawing of $G$ using $s$ slopes. 
It is proved in \cite{Duj-07b} that the outerplanar slope number of any outerplanar graph is at most the number of its vertices. 
We provide a tight bound on the outerplanar slope number in terms of the maximum degree. 

\begin{mainthm}
For\/ $\Delta\ge 4$, every outerplanar graph with maximum degree at most\/ $\Delta$ has outerplanar slope number at most\/ $\Delta-1$.
\end{mainthm}

That the bound of $\Delta-1$ is tight is witnessed by a graph consisting of a cycle $C$ with $2\Delta-3$ vertices $v_1,\ldots,v_{2\Delta-3}$ each of which has $\Delta-2$ additional private neighbors. 
In any outerplanar straight-line drawing of this graph with $\Delta-2$ edge slopes, $C$ must be the boundary of an inner face. 
It cannot be strictly convex, as in a strictly convex polygon each slope can be used by at most two edges. 
Therefore, some angle of this face, say at $v_i$, is not strictly convex. 
Each of the private neighbors of $v_i$ needs to be connected with $v_i$ by an edge lying outside the cycle. 
This is a contradiction, because at most $\Delta-3$ slopes are available for such edges. 
Moreover, for $\Delta\in\set{2,3}$, the lower bound is $3$ as witnessed by the triangle. 

The tight bounds for the outerplanar slope number of outerplanar graphs with maximum degree $\Delta\in\set{1,2,3}$ are also determined. 
It is $1$ for $\Delta=1$ and $3$ for $\Delta\in\set{2,3}$. 
For the latter, the upper bound follows from the Main Theorem applied with $\Delta=4$, while the tightness is witnessed by a triangle. 

The proof of our theorem is constructive and yields an algorithm to produce a claimed drawing that performs a linear number of arithmetic operations on rationals.

\section{Basic definitions}

For the remainder of the paper, we assume that an outerplanar drawing of a graph $G$ with maximum degree at most $\Delta$ is given, where $\Delta\ge 4$. 
This drawing determines the cyclic ordering of edges around every vertex. 
We produce an outerplanar straight-line drawing of $G$ with few edge slopes which preserves this ordering at every vertex. 
The set of slopes that we use depends only on $\Delta$, so we can draw each connected component of $G$ separately. 
Therefore, for the remainder of the paper, we assume that $G$ is connected. 
Our construction is inductive---it composes the entire drawing of $G$ from drawings of subgraphs of $G$ that we call bubbles. 

We distinguish the \emph{outer face} of $G$ (the one that is unbounded in the given drawing of $G$ and contains all vertices on the boundary) from the \emph{inner faces}. 
The edges on the boundary of the former are \emph{outer edges}, while all remaining ones are \emph{inner edges}. 
A \emph{snip} is a simple closed counterclockwise-oriented curve $\gamma$ which
\begin{itemize}
\item passes through some pair of vertices $u$ and $v$ of $G$ (possibly being the same vertex) and through no other vertex of $G$,
\item on the way from $v$ to $u$ goes entirely through the outer face of $G$ and crosses no edge of $G$,
\item on the way from $u$ to $v$ (considered only if $u\ne v$) goes through inner faces of $G$ possibly crossing some inner edges of $G$ that are not incident to $u$ or $v$, each at most once,
\item crosses no edge of $G$ incident to $u$ or $v$ at a point other than $u$ or $v$.
\end{itemize}
Every snip $\gamma$ defines a \emph{bubble} $H$ in $G$ as the subgraph of $G$ induced on the vertices lying on or inside $\gamma$. 
Note that $H$ is a connected induced subgraph of $G$ as $\gamma$ crosses no outer edges. 
The \emph{roots} of $H$ are the vertices $u$ and $v$ together with all vertices of $H$ adjacent to $G-H$. 
The snip $\gamma$ breaks the cyclic clockwise order of the edges of $H$ around $u$ or $v$, making it a linear order, which we envision as going from left to right. 
In particular, we call the first edge in this order \emph{leftmost} and the last one \emph{rightmost}. 
Similar left-to-right orderings of edges are defined at the remaining roots of $H$, except that in their case the cyclic order is broken by the edges connecting $H$ to $G-H$. 
The \emph{root-path} of $H$ is the simple oriented path $P$ in $H$ that starts at $u$ with the rightmost edge, continues counterclockwise along the boundary of the outer face of $H$, and ends at $v$ with the leftmost edge. 
If $u=v$, then the root-path consists of that single vertex only. 
All roots of $H$ lie on the root-path---their sequence in the order along the root-path is the \emph{root-sequence} of $H$. 
A bubble with $k$ roots is called a \emph{$k$-bubble}. 
See Figure \ref{fig:bubble} for an illustration.

\begin{figure}[t]
\begin{center}
\psfrag{1}{{$H_1$}}
\psfrag{2}{{$H_2$}}
\psfrag{3}{{$H_3$}}
\psfrag{4}{{$H_4$}}
\psfrag{5}{{$H_5$}}
\psfrag{6}{{$H_6$}}
\psfrag{7}{{$H_7$}}
\psfrag{8}{{$H_8$}}
\psfrag{9}{{$H_9$}}
\psfrag{10}{{$H_{10}$}}
\psfrag{a}{$u$}
\psfrag{b}{$v$}
\psfrag{c}{$w$}
\psfrag{P}{}
\includegraphics[scale=1.25]{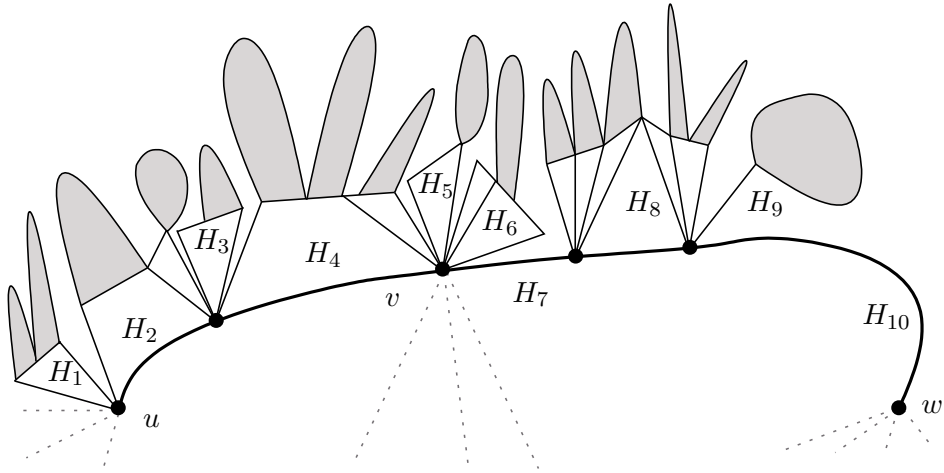}
\caption{A $3$-bubble $H$ with root-path drawn thick, root-sequence $u,v,w$ (connected to the remaining graph by dotted edges), and splitting sequence $(H_1,\ldots,H_{10})$, in which $H_1$, $H_3$, $H_5$, $H_6$, and $H_9$ are v-bubbles, while $H_2$, $H_4$, $H_7$, $H_8$, and $H_{10}$ are e-bubbles. 
There are six $2$-bubbles with roots $u$ and $v$ contained in $H$: $(H_1,H_2,H_3,H_4)$, $(H_1,H_2,H_3,H_4,H_5)$, $(H_1,H_2,H_3,H_4,H_5,H_6)$, $(H_2,H_3,H_4)$, $(H_2,H_3,H_4,H_5)$, and $(H_2,H_3,H_4,H_5,H_6)$.}
\label{fig:bubble}
\end{center}
\end{figure}

Except at the very end of the proof where we regard the entire $G$ as a bubble, we deal with bubbles $H$ whose first root $u$ and last root $v$ are adjacent to $G-H$. 
For such bubbles $H$, all the roots, the root-path, the root-sequence and the left-to-right order of edges at every root do not depend on the particular snip $\gamma$ used to define $H$. 
Specifically, for such bubbles $H$, the roots are exactly the vertices adjacent to $G-H$, while the root-path consists of the edges of $H$ incident to inner faces of $G$ that are contained in the outer face of $H$. 
From now on, we will refer to the roots, the root-path, the root-sequence and the left-to-right order of edges at every root of a bubble $H$ without specifying the snip $\gamma$ explicitly. 

Bubbles admit a natural decomposition, which is the base of our recursive drawing. 

\begin{lemma}\label{lem:split}
Let\/ $H$ be a bubble with root-path\/ $v_1\ldots v_k$. 
Every component of\/ $H-\set{v_1,\ldots,v_k}$ is adjacent to either one vertex among\/ $v_1,\ldots,v_k$ or two consecutive vertices from\/ $v_1,\ldots,v_k$. 
Moreover, there is at most one component adjacent to\/ $v_i$ and\/ $v_{i+1}$ for\/ $1\le i<k$.
\end{lemma}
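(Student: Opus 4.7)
The plan is to argue by contradiction. Suppose some component $C$ of $H-\{v_1,\ldots,v_k\}$ has neighbors both $v_i$ and $v_j$ with $j\ge i+2$. Fix $c_i,c_j\in C$ adjacent in $H$ to $v_i,v_j$ respectively and pick a path $P$ in $C$ from $c_i$ to $c_j$. Concatenating the root-path segment $v_iv_{i+1}\cdots v_j$ with the edges $v_jc_j$, $c_iv_i$ and the path $P$ produces a simple cycle $\Gamma$ in $H$; viewed in the planar drawing inherited from $G$, it is a Jordan curve.

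The geometric crux, which I would prove first, is that no vertex of $G$ lies strictly in the interior of $\Gamma$. The open outer face of $G$ is an unbounded connected open region disjoint from all edges of $G$, hence disjoint from $\Gamma$; since $\Gamma$ is bounded and the open outer face is unbounded and connected, the open outer face lies entirely in the exterior of $\Gamma$. By outerplanarity every vertex of $G$ sits on the boundary of the open outer face, so a vertex strictly inside $\Gamma$ would place a sector of the open outer face arbitrarily close to it---and hence inside $\Gamma$---a contradiction.

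I would then extract the contradiction at the intermediate vertex $v_{i+1}$, which lies on $\Gamma$. Because $v_{i+1}$ is on the root-path, the edges $v_{i+1}v_i$ and $v_{i+1}v_{i+2}$ are consecutive in the cyclic order of $H$-edges at $v_{i+1}$, with the outer face of $H$ occupying the sector between them on the exterior side of $\Gamma$. Any further $H$-edge at $v_{i+1}$ must therefore lie on the interior side of $\Gamma$ and, by the crux, end at a vertex already on $\Gamma$. Such an edge is either a chord to another root-path vertex $v_\ell$ with $i+2<\ell\le j$, in which case one descends to a shorter cycle built from the chord and an appropriate subpath of $P$, or it ends at a vertex of $P$, in which case $C$ is adjacent to $v_{i+1}$ and one descends by replacing $v_i$ with $v_{i+1}$. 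In either case the index gap $j-i$ strictly decreases, so the descent terminates in a contradiction. The \emph{moreover} uniqueness part follows by the same type of argument: two distinct components both adjacent to $v_i$ and $v_{i+1}$ would furnish two internally disjoint paths joining a neighbor of $v_i$ to a neighbor of $v_{i+1}$, and the resulting theta-configuration would trap vertices of one component inside a cycle formed from the other, again contradicting the crux.

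The main obstacle is checking that the descent in the third paragraph really terminates and does not get stuck in the degenerate subcase where $v_{i+1}$ has $H$-degree exactly $2$; in that subcase one has to invoke the snip $\gamma$ defining $H$ to rule out the resulting configuration. A cleaner alternative that I would try in parallel bypasses the recursion entirely by analyzing the outer face boundary walk of $H$: since all vertices of $H$ lie on this walk (because the outer face of $G$ is contained in the outer face of $H$), each component of $H-\{v_1,\ldots,v_k\}$ corresponds to a maximal off-path excursion of the walk, and the counterclockwise direction forces such an excursion either to return to its starting root-path vertex or to proceed to the next one.
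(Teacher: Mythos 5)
There is a genuine gap, and it is not just the degenerate subcase you flag. Your ``crux'' (no vertex of $G$ lies strictly inside the cycle $\Gamma$) is correct, but it controls only the \emph{interior} side of $\Gamma$, and the lemma cannot be proved from that side alone. The actual contradiction must come from what sits on the \emph{outer} side of the root-path segment $v_i\ldots v_j$: by the definition of the root-path (it runs along the part of the boundary of the outer face of $H$ facing the inner-face portion of the snip), every edge $v_mv_{m+1}$ of that segment is incident to an inner face of $G$ that is not a face of $H$. The paper takes the symmetric difference of these inner faces to obtain a second $v_i,v_j$-path $P''$ lying on the outer side of the segment, so that $P\cup P''$ encloses $v_{i+1},\ldots,v_{j-1}$ and thereby keeps them off the outer face of $G$ --- that is the contradiction. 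Your descent never reaches a contradiction: it either halts when $v_{i+1}$ has $H$-degree $2$ (where, as you note, nothing more can be said without the snip --- e.g.\ a $4$-cycle $v_1v_2v_3c$ with $C=\{c\}$ is a perfectly good outerplanar drawing, and only the root-path property rules it out as a bubble with root-path $v_1v_2v_3$), or it halts at consecutive roots, which the lemma permits. The same omission breaks your uniqueness argument: the theta graph formed by $P$, $P'$ and the edge $v_iv_{i+1}$ traps nothing if $C$ and $C'$ lie on opposite sides of $v_iv_{i+1}$ (as in an outerplanar drawing of $K_4$ minus an edge); excluding that case again requires knowing that one side of $v_iv_{i+1}$ is occupied by an inner face of $G$ outside $H$, which the paper uses to produce a third internally disjoint path and hence a forbidden $K_{2,3}$ subdivision. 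Your ``cleaner alternative'' via the boundary walk of the outer face of $H$ points in the right direction, since it implicitly invokes this root-path property, but as stated it is a restatement of the lemma rather than a proof.
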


\begin{proof}
Let $C$ be a connected component of $H-\set{v_1,\ldots,v_k}$.
As $H$ itself is connected, $C$ must be adjacent to a vertex from $v_1,\ldots,v_k$. 
In order to get a contradiction, suppose that $C$ is connected to two non-consecutive vertices $v_i$ and $v_j$. 
Let $P$ be a simple $v_i,v_j$-path having all internal vertices in $C$. 
Let $P'=v_i\ldots v_j$ be the subpath of the root-path of $H$ connecting $v_i$ and $v_j$. 
Since $v_1\ldots v_k$ is the root-path of $H$, all edges connecting the internal vertices of $P'$ to $G-H$ are inner edges. Hence, also the edges of $P'$ lie on inner faces which are not faces of $H$.
The symmetric difference of all these inner faces considered as sets of edges is a simple cycle containing $P'$ as a subpath. 
Let $P''$ denote the other $v_i,v_j$-subpath of that cycle. 
It is internally disjoint from $P$ and $P'$. 
Moreover, $P''$ and $P$ together enclose $P'$ and thus the internal vertices of $P'$ do not lie on the outer face---contradiction. 

Now, to prove the second statement, suppose that for some $i$ two components $C$ and $C'$ of $H-\set{v_1,\ldots,v_k}$ are adjacent to both $v_i$ and $v_{i+1}$. 
We find two internally disjoint $v_i,v_{i+1}$-paths $P$ and $P'$ through $C$ and $C'$, respectively. 
As in the paragraph above, we use the fact that $v_iv_{i+1}$ is contained in an inner face, which is not a face of $H$.
The third path $P''$ is obtained from that face by deleting the edge $v_iv_{i+1}$.
It follows that $P$, $P'$, and $P''$ form a subdivision of $K_{2,3}$, which contradicts the outerplanarity of $G$.
\end{proof}

Lemma~\ref{lem:split} allows us to assign each component of $H-\set{v_1,\ldots,v_k}$ to a vertex of $P$ or an edge of $P$ so that every edge is assigned at most one component. 
For a component $C$ assigned to a vertex $v_i$, the graph induced on $C\cup\set{v_i}$ is called a \emph{v-bubble}. 
Such a v-bubble is a $1$-bubble with root $v_i$. 
For a component $C$ assigned to an edge $v_iv_{i+1}$, the graph induced on $C\cup\set{v_i,v_{i+1}}$ is called an \emph{e-bubble}. 
Such an e-bubble is a $2$-bubble with roots $v_i$ and $v_{i+1}$. 
If no component is assigned to an edge of $P$, then we let that edge alone be a \emph{trivial e-bubble}. 
All v-bubbles of $v_i$ in $H$ are naturally ordered by their clockwise arrangement around $v_i$ in the drawing. 
All this leads to a decomposition of the bubble $H$ into a sequence $(H_1,\ldots,H_b)$ of v- and e-bubbles such that the naturally ordered v-bubbles of $v_1$ precede the e-bubble of $v_1v_2$, which precedes the naturally ordered v-bubbles of $v_2$, and so on.
We call this sequence the \emph{splitting sequence} of $H$ and write $H=(H_1,\ldots,H_b)$. 
The splitting sequence of a single-vertex $1$-bubble is empty. 
Every $1$-bubble with more than one vertex is a v-bubble or a bouquet of several v-bubbles. 
The splitting sequence of a $2$-bubble may consist of several v- and e-bubbles. 
See Figure \ref{fig:bubble} for an illustration. 

\begin{figure}[t]
\begin{center}
\psfrag{v}{$v$}
\psfrag{v1}{$v^1$}
\psfrag{vk}{$v^k$}
\psfrag{vl}{$v^{\ell}$}
\psfrag{v_0}{$v_0$}
\psfrag{v01}{$v_0^1$}
\psfrag{v_1}{$v_1$}
\psfrag{v_n-1}{$v_{n-1}$}
\psfrag{vn-1k-1}{$v_{n-1}^{k_{n-1}}$}
\psfrag{v_n}{$v_n$}
\psfrag{u}{$u$}
\psfrag{u1}{$u^1$}
\includegraphics[scale=1.8]{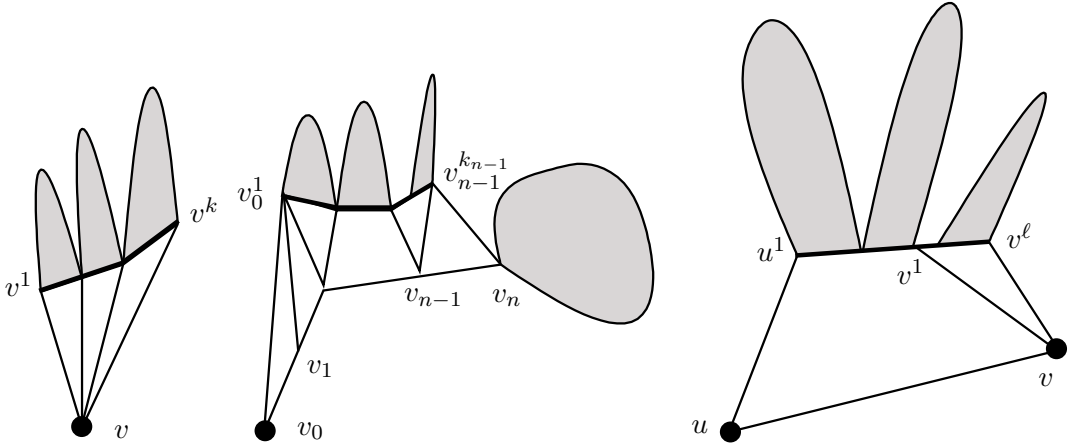}
\caption{Three ways of obtaining smaller bubbles from v- and e-bubbles as described in Lemma~\ref{lem:induction-structure}. 
The new bubbles are marked gray, and the new root-paths are drawn thick.}
\label{fig:nextbubble}
\end{center}
\end{figure}

The general structure of the induction in our proof is covered by the following lemma (see Figure \ref{fig:nextbubble}):

\begin{lemma}\label{lem:induction-structure}
\leavevmode
\begin{enumeratedlemma}
\item\label{item:v-bubble-without-a-root}
Let\/ $H$ be a v-bubble rooted at\/ $v$. 
Let\/ $v^1,\ldots,v^k$ be the neighbors of\/ $v$ in\/ $H$ from left to right. 
It follows that\/ $H-v$ is a\/ $k$-bubble with root-sequence\/ $v^1,\ldots,v^k$.
\item\label{item:v-bubble-without-a-path}
Let\/ $H$ be a v-bubble rooted at\/ $v_0$. 
Consider an induced path\/ $v_0\ldots v_n$ in\/ $H$ that starts with the rightmost edge at\/ $v_0$ and continues counterclockwise along the outer face of\/ $H$ so that\/ $v_1,\ldots,v_{n-1}$ are not cut-vertices in\/ $H$. 
It follows that\/ $H-\set{v_0,\ldots,v_n}$ has a unique component\/ $H'$ adjacent to both\/ $v_0$ and\/ $v_n$. 
Moreover, let\/ $X$ be the subgraph of\/ $H$ induced on\/ $v_0,\ldots,v_n$ and the vertices of\/ $H'$. 
Let\/ $v_0^1,\ldots,v_0^{k_0},v_1$ be the neighbors of\/ $v_0$ in\/ $X$ in clockwise order. 
For\/ $1\le i\le n-1$, let\/ $v_{i-1},v_i^1,\ldots,v_i^{k_i},v_{i+1}$ be the neighbors of\/ $v_i$ in\/ $X$ in clockwise order. 
Let\/ $v_{n-1},v_n^1,\ldots,v_n^{k_n}$ be the neighbors of\/ $v_n$ in\/ $X$ in clockwise order.
It follows that\/ $H'$ is a bubble with root-sequence\/ $v_0^1,\ldots,v_0^{k_0},\allowbreak v_1^1,\ldots,v_1^{k_1},\ldots,\allowbreak v_n^1,\ldots,v_n^{k_n}$ in which\/ $v_i^{k_i}$ and\/ $v_{i+1}^1$ coincide whenever the inner face of\/ $H$ containing\/ $v_iv_{i+1}$ is a triangle.
\item\label{item:e-bubble}
Let\/ $H$ be an e-bubble with roots\/ $u$ and\/ $v$. 
Let\/ $u^1,\ldots,u^k,v$ be the neighbors of\/ $u$ in\/ $H$ from left to right and\/ 
$u,v^1,\ldots,v^\ell$ be the neighbors of\/ $v$ in\/ $H$ from left to right. 
It follows that\/ $H-\set{u,v}$ is a bubble with root-sequence\/ $u^1,\ldots,u^k,\allowbreak v^1,\ldots,v^\ell$ in which\/ $u^k$ and\/ $v^1$ coincide if the inner face of\/ $H$ containing\/ $uv$ is a triangle.
\end{enumeratedlemma}
\end{lemma}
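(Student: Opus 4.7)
The plan is to handle the three parts uniformly: in each case I build a snip $\gamma'$ that defines the asserted bubble and then read off the root-sequence from the neighbors of the removed vertices. The key observation throughout is that the only vertices of $H$ adjacent to $G-H$ are the roots of $H$ itself, so the roots of the new bubble are precisely the neighbors in $H$ of the removed vertices that survive the removal.

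For part~\ref{item:v-bubble-without-a-root}, I would obtain $\gamma'$ from the snip $\gamma$ of $H$ by deleting the single point where $\gamma$ meets $v$ and inserting a short detour that enters between the edges incident to $v$ just after $vv^1$, loops tightly around $v$ through the outer face of $H-v$, and leaves just before $vv^k$. Verifying that $\gamma'$ is a snip and that the bubble it defines is the induced subgraph on $V(H)\setminus\set{v}$ is routine; the roots of $H-v$ are then exactly the neighbors $v^1,\dots,v^k$ of $v$, encountered in this order along $\gamma'$. Part~\ref{item:e-bubble} is entirely analogous: $\gamma'$ is obtained by detouring $\gamma$ around both $u$ and $v$, crossing the inner face of $H$ containing $uv$ on the way. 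The roots of $H-\set{u,v}$ are the neighbors of $u$ or $v$ in $H$ other than $u$ and $v$ themselves, yielding the sequence $u^1,\dots,u^k,v^1,\dots,v^\ell$; when the inner face containing $uv$ is a triangle $uvw$, the vertex $w$ serves simultaneously as $u^k$ and as $v^1$, so they coincide.

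Part~\ref{item:v-bubble-without-a-path} is the real work. First I would establish the existence and uniqueness of $H'$ by a block argument: since $v_1,\dots,v_{n-1}$ are not cut-vertices of $H$, the whole path $v_0v_1\ldots v_n$ lies in a single block $B$ of $H$, and when it gives rise to a non-empty $H'$ the block $B$ is $2$-connected, so its outer face is a Hamilton cycle. Then $B-\set{v_0,\dots,v_n}$ is the opposite arc of that cycle together with its chords, a single connected subgraph adjacent to $v_0$ via its last vertex and to $v_n$ via its first. Any other component of $H-\set{v_0,\dots,v_n}$ touching $v_0$ or $v_n$ lies in a different block, attached at only one of the two, and so meets at most one of $\set{v_0,v_n}$. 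For the snip I would let $\gamma'$ enter at $v_0^1$, thread through the consecutive inner faces $F_0,F_1,\dots,F_{n-1}$ of $H$ where $F_i$ contains $v_iv_{i+1}$, transition from $F_i$ to $F_{i+1}$ by a short arc that slips past $v_{i+1}$, exit at $v_n^{k_n}$, and return through the outer face of $G$ to $v_0^1$ without meeting any other vertex. By construction $\gamma'$ visits in order exactly the neighbors of $v_0,\dots,v_n$ lying in $H'$, giving the claimed root-sequence $v_0^1,\dots,v_0^{k_0},v_1^1,\dots,v_1^{k_1},\dots,v_n^1,\dots,v_n^{k_n}$. The coincidence $v_i^{k_i}=v_{i+1}^1$ when $F_i$ is a triangle is immediate: the third vertex of $F_i$ is simultaneously the last inner neighbor of $v_i$ before $v_{i+1}$ clockwise and the first inner neighbor of $v_{i+1}$ after $v_i$ clockwise. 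The genuinely delicate step---and what I expect to be the main obstacle---is verifying that this routing is legal, in particular that consecutive faces $F_i$ and $F_{i+1}$ meet only at $v_{i+1}$ and that the snip can slip from one to the next without recrossing an edge or touching a second vertex; this is where outerplanarity of $G$ together with the absence of cut-vertices among $v_1,\dots,v_{n-1}$ plays its decisive role.
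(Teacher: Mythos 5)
Your proposal is correct and follows essentially the same route as the paper: in each part the paper also constructs an explicit snip that hugs the removed vertices on the inner side, crossing exactly the inner edges from them to the surviving neighbors, and reads off the root-sequence (with the triangle-face coincidences) from the clockwise order of those neighbors. The only real divergences are that you argue existence and uniqueness of $H'$ via the block containing $v_0\ldots v_n$ and the Hamilton cycle of a $2$-connected outerplanar block, where the paper uses a shorter argument from the path consisting of outer edges, and that the legality of the face-by-face routing you defer as ``the main obstacle'' is exactly what the paper dispatches with its observation that the symmetric difference of the inner faces incident to the removed vertices forms a simple cycle passing through the removed vertices and the new roots in the stated order.
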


\begin{proof}
First we prove \ref{item:v-bubble-without-a-root}. 
Since $H$ is a v-bubble, $H-v$ is connected. 
The symmetric difference of the inner faces of $H$ incident to $v$, considered as sets of edges, gives a simple clockwise cycle in $H$ passing through $v$ and $v^1,\ldots,v^k$ in this order. 
Let $\gamma$ be a closed curve going counterclockwise from $v^k$ through the outer face of $G$ to $v^1$, and then through the inner faces of $H$ at $v$, crossing the edges $vv^2,\ldots,vv^{k-1}$ in this order, back to $v_k$. 
Clearly, $\gamma$ is a snip defining the bubble $H-v$ with root-sequence $v^1,\ldots,v^k$. 

Next we prove \ref{item:v-bubble-without-a-path}. 
Since none of $v_1,\ldots,v_{n-1}$ is a cut-vertex in $H$, the graph $H-\set{v_0,\ldots,v_n}$ has a component adjacent to both $v_0$ and $v_n$. 
Moreover, since the path $v_0\ldots v_n$ consists only of outer edges, such a component is unique. 
Thus $H'$ is well defined. 
Now, the symmetric difference of the inner faces of $X$ incident to any of $v_0,\ldots,v_n$, considered as sets of edges, gives a simple clockwise cycle in $H$ passing through $v_n,\ldots,v_0$ and then through $v_0^1,\ldots,v_0^{k_0},\allowbreak v_1^1,\ldots,v_1^{k_1},\ldots,\allowbreak v_n^1,\ldots,v_n^{k_n}$ in this order. 
Let $\gamma$ be a closed curve going counterclockwise from $v_n^{k_n}$ through the outer face of $G$ to $v_0^1$, and then through the inner faces of $H$ at $v_0,\ldots,v_n$, crossing the edges $v_0v_0^2,\ldots,v_0v_0^{k_0},\allowbreak v_1v_1^1,\ldots,v_1v_1^{k_1},\ldots,\allowbreak v_nv_n^1,\ldots,v_nv_n^{k_n-1}$ in this order, back to $v_n^{k_n}$. 
Clearly, $\gamma$ is a snip defining the bubble $H'$ with root-sequence $v_0^1,\ldots,v_0^{k_0},\allowbreak v_1^1,\ldots,v_1^{k_1},\ldots,\allowbreak v_n^1,\ldots,v_n^{k_n}$. 

Finally we show \ref{item:e-bubble}. 
Since $H$ is an e-bubble, $H-\set{u,v}$ is connected. 
Again, the symmetric difference of the inner faces of $H$ incident to $u$ or $v$, considered as sets of edges, gives a simple clockwise cycle in $H$ passing through $v$, $u$, and $v^1,\ldots,v^k$ in this order. 
Let $\gamma$ be a closed curve going counterclockwise from $v^\ell$ through the outer face of $G$ to $u^1$, and then through the inner faces of $H$ at $u$ and $v$, crossing the edges $uu^2,\ldots,uu^k,\allowbreak vv^1,\ldots,vv^{\ell-1}$ in this order, back to $v_\ell$. 
Clearly, $\gamma$ is a snip defining the bubble $H-\set{u,v}$ with root-sequence $u^1,\ldots,u^k,\allowbreak v^1,\ldots,v^\ell$.
\end{proof}

\section{Bounding regions}

Depending on the maximum degree $\Delta$ of $G$, define the set $S$ of $\Delta-1$ slopes to consist of the horizontal slope and the slopes of vectors $\ff_1,\ldots,\ff_{\Delta-2}$, where
\begin{gather*}
\ff_i=(-\tfrac{1}{2}+\tfrac{i-1}{\Delta-3},1)\quad\text{for $i=1,\ldots,\Delta-2$}.
\end{gather*}
An important property of $S$ is that it cuts the horizontal segment $L$ from $(-\frac{1}{2},1)$ to $(\frac{1}{2},1)$ into $\Delta-3$ segments 
of equal length $\frac{1}{\Delta-3}$.
We construct an outerplanar straight-line drawing of $G$ using only slopes from $S$ and preserving the given cyclic ordering of edges at each vertex of $G$.

The essential tool in proving that our construction does not make bubbles overlap are bounding regions. 
Their role is to bound the area of the plane occupied by bubbles. 
The bounding region of a bubble is parametrized by $\ell$ and $r$ which depend on the degrees of the roots in the bubble. 
Let $v$ be a point in the plane. 
For a vector $x$, let $\Ray{v}{x}=\set{v+\alpha x\colon\alpha\ge 0}$. 
For $0\le\ell\le\Delta-1$, we define $\LBR{v}{\ell}$ to be the cone consisting of $v$ and all points $p$ such that
\begin{itemizedcases}
\itemcase{$p_y\ge v_y$,}{}
\itemcase{$p$ lies on $\Ray{v}{\ff_1}$ or to the right of it}{if $\ell=1$,}
\itemcase{$p_x>v_x$}{if $\Delta=4$ and $\ell=2$,}
\itemcase{$p$ lies to the right of $\Ray{v}{\ff_\ell+\tfrac{1}{\Delta-4}\ff_1}$}{if $\Delta\ge 5$ and $2\le\ell\le\Delta-2$,}
\itemcase{$p$ lies to the right of $\Ray{v}{\ff_{\Delta-2}}$}{if $\ell=\Delta-1$.}
\end{itemizedcases}
Similarly, for $0\le r\le\Delta-1$, we define $\RBR{v}{r}$ to be the cone consisting of $v$ and all points $p$ such that
\begin{itemizedcases}
\itemcase{$p_y\ge v_y$,}{}
\itemcase{$p$ lies to the left of $\Ray{v}{\ff_1}$}{if $r=0$,}
\itemcase{$p_x<v_x$}{if $\Delta=4$ and $r=1$,}
\itemcase{$p$ lies to the left of $\Ray{v}{\ff_r+\tfrac{1}{\Delta-4}\ff_{\Delta-2}}$}{if $\Delta\ge 5$ and $1\le r\le\Delta-3$,}
\itemcase{$p$ lies on $\Ray{v}{\ff_{\Delta-2}}$ or to the left of it}{if $r=\Delta-2$.}
\end{itemizedcases}

\begin{figure}[t]
\begin{center}
\begin{tikzpicture}[>=latex,scale=2.5]
    \fill[color=black!5] (-0.85,1.7)--(0,0)--(-0.472,1.7)--cycle;
    \fill[color=black!10] (-0.472,1.7)--(0,0)--(-0.094,1.7)--cycle;
    \fill[color=black!15] (-0.094,1.7)--(0,0)--(0.283,1.7)--cycle;
    \fill[color=black!20] (0.283,1.7)--(0,0)--(0.85,1.7)--cycle;
    \fill[color=black!25] (0.85,1.7)--(0,0)--(2.8,0)--(1.95,1.7)--cycle;
    \draw (-0.85,0)--(3.65,0);
    \draw[thick,->] (0,0)--(-0.5,1);
    \draw[thick,->] (0,0)--(-0.167,1);
    \draw[thick,->] (0,0)--(0.167,1);
    \draw[thick,->] (0,0)--(0.5,1);
    \draw[->] (-0.5,1)--(-0.75,1.5);
    \draw[->] (-0.167,1)--(-0.417,1.5);
    \draw[->] (0.167,1)--(-0.083,1.5);
    \draw[->] (0.5,1)--(0.25,1.5);
    \draw (-0.75,1.5)--(-0.85,1.7);
    \draw[dashed] (0,0)--(-0.472,1.7);
    \draw[dashed] (0,0)--(-0.094,1.7);
    \draw[dashed] (0.167,1)--(0.283,1.7);
    \draw[dashed] (0.5,1)--(0.85,1.7);
    \node[above] at (-0.85,0) {\hbox to 0pt{\hss$\ell={}$}$0$};
    \node[above] at (-0.85,1.7) {\hbox to 0pt{\hss$\ell={}$}$1$};
    \node[above] at (-0.472,1.7) {$2$};
    \node[above] at (-0.094,1.7) {$3$};
    \node[above] at (0.283,1.7) {$4$};
    \node[above] at (0.85,1.7) {$5$};
    \node[above right] at (0,0) {\hskip 0.4em $u$};
    \begin{scope}[xshift=2.8cm,xscale=-1]
    \fill[color=black!5] (-0.85,1.7)--(0,0)--(-0.472,1.7)--cycle;
    \fill[color=black!10] (-0.472,1.7)--(0,0)--(-0.094,1.7)--cycle;
    \fill[color=black!15] (-0.094,1.7)--(0,0)--(0.283,1.7)--cycle;
    \fill[color=black!20] (0.283,1.7)--(0,0)--(0.85,1.7)--cycle;
    \draw[thick,->] (0,0)--(-0.5,1);
    \draw[thick,->] (0,0)--(-0.167,1);
    \draw[thick,->] (0,0)--(0.167,1);
    \draw[thick,->] (0,0)--(0.5,1);
    \draw[->] (-0.5,1)--(-0.75,1.5);
    \draw[->] (-0.167,1)--(-0.417,1.5);
    \draw[->] (0.167,1)--(-0.083,1.5);
    \draw[->] (0.5,1)--(0.25,1.5);
    \draw (-0.75,1.5)--(-0.85,1.7);
    \draw[dashed] (0,0)--(-0.472,1.7);
    \draw[dashed] (0,0)--(-0.094,1.7);
    \draw[dashed] (0.167,1)--(0.283,1.7);
    \draw[dashed] (0.5,1)--(0.85,1.7);
    \node[above] at (-0.85,0) {$r$\hbox to 0pt{${}=5$\hss}};
    \node[above] at (-0.85,1.7) {$4$};
    \node[above] at (-0.472,1.7) {$3$};
    \node[above] at (-0.094,1.7) {$2$};
    \node[above] at (0.283,1.7) {$1$};
    \node[above] at (0.85,1.7) {\hbox to 0pt{\hss$r={}$}$0$};
    \node[above left] at (0,0) {$v$\hskip 0.4em\null};
    \end{scope}
\end{tikzpicture}
\caption{Boundaries of $\LBR{u}{\ell}$ (left) and $\RBR{v}{r}$ (right) for $\Delta=6$. 
Vectors $\ff_i$ at $u$ and $v$ are indicated by thick arrows. 
Vectors $\tfrac{1}{2}\ff_1$ at $u+\ff_i$ and $\tfrac{1}{2}\ff_4$ at $v+\ff_i$ are indicated by thin arrows. 
Note that $u+\ff_3$ lies on the boundary of $\LBR{u}{4}$ and $v+\ff_2$ lies on the boundary of $\RBR{v}{1}$.}
\label{fig:br-delta>=5}
\end{center}
\end{figure}
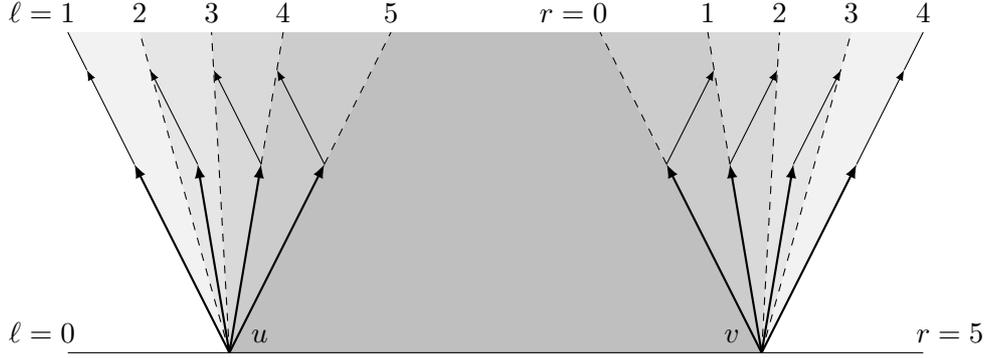

\noindent See Figure \ref{fig:br-delta>=5} for an illustration. 
Now, for points $u,v$ in the plane such that $u_y=v_y$ and $u_x\le v_x$, we define bounding regions as follows:
\begin{alignat*}{2}
\BR{uv}{\ell}{r}&=\LBR{u}{\ell}\cap\RBR{v}{r}&\quad&\text{for $0\le\ell,r\le\Delta-1$,}\\
\BRU{uv}{\ell}{r}{h}&=\BR{uv}{\ell}{r}\cap\set{p\colon p_y<u_y+h}&\quad&\text{for $0\le\ell,r\le\Delta-1$ and $h>0$.}
\end{alignat*}
We denote $\BR{vv}{\ell}{r}$ simply by $\BR{v}{\ell}{r}$ and $\BRU{vv}{\ell}{r}{h}$ simply by $\BRU{v}{\ell}{r}{h}$. 
Note that the bottom border of a bounding region is always included, the left border (if exists) is included if $\ell=1$, the right border (if exists) is included if $r=\Delta-2$, and the top border (if exists) is never included. 
If $\ell>r$, then $\BR{v}{\ell}{r}=\set{v}$.

We use $\BR{v}{\ell}{r}$ and $\BRU{v}{\ell}{r}{h}$ to bound drawings of $1$-bubbles $H$ with root $v$ such that $r-\ell+1=d_H(v)$. 
Note that every $1$-bubble drawn inside $\BR{v}{\ell}{r}$ can be scaled to fit inside $\BRU{v}{\ell}{r}{h}$ for any $h>0$ without changing slopes. 
We use $\BR{uv}{\ell}{r}$ and $\BRU{uv}{\ell}{r}{h}$ with $u\ne v$ to bound drawings of $2$-bubbles $H$ whose root-path starts at $u$ and ends at $v$, such that $\ell=\Delta-d_H(u)$ and $r=d_H(v)-1$. 
Here $H$ cannot be scaled if the positions of both $u$ and $v$ are fixed, so the precise value of $h$ matters. 
However, every $2$-bubble drawn inside $\BR{uv}{\ell}{r}$ can be scaled to fit inside $\BRU{uw}{\ell}{r}{h}$ for any $h>0$ without changing slopes, where $w$ is some point of the segment $uv$. 

\begin{lemma}
Bounding regions have the following geometric properties (\/$*$ stands for any value which if generally feasible is irrelevant to the statement).
\begin{enumeratedlemma}
\item\label{item:br-inclusion}
If\/ $u'v'\subset uv$, $\ell'\ge\ell$, and\/ $r'\le r$, then\/ $\BR{u'v'}{\ell'}{r'}\subset\BR{uv}{\ell}{r}$.
\item\label{item:br-slopes}
If\/ $i<\ell$, then a vector at\/ $u$ in direction\/ $\ff_i$ points outside\/ $\BR{uv}{\ell}{*}$ to the left of it.
If\/ $i>r$, then a vector at\/ $v$ in direction\/ $\ff_i$ points outside\/ $\BR{uv}{*}{r}$ to the right of it.
\item\label{item:br-neighbors}
If\/ $u,v,w$ are points on a horizontal line in this order from left to right and\/ $\ell-1\ge r+1$, then\/ $\BR{uv}{*}{r}\cap\BR{vw}{\ell}{*}=\set{v}$.
\enumeratedlemmaintertext{Moreover, the following holds for\/ $\Delta\ge 5$.}
\item\label{item:br-induction}
For\/ $\ell<r$, $h>0$, $u'=u+h\ff_\ell$, and\/ $v'=v+h\ff_r$, we have\/ $\BRU{u'v'}{1}{\Delta-2}{\tfrac{h}{\Delta-4}}\subset\BRU{uv}{\ell}{r}{\tfrac{\Delta-3}{\Delta-4}h}$.
\item\label{item:br-composition}
If\/ $u,v,w$ are points on a horizontal line in this order from left to right, $r'\le\Delta-3$, and\/ $r\ge 1$, then\/ $\BRU{uv}{\ell}{r'}{(\tfrac{\Delta-3}{\Delta-4})^2\abs{vw}}\subset\BR{uw}{\ell}{r}$.
\item\label{item:br-nonneighbors}
If\/ $u,v,w,x$ are points on a horizontal line in this order from left to right, $\abs{uv}=\abs{wx}\le\abs{vw}$, $\ell\ge 2$, and\/ $r\le\Delta-3$, then\/ $\BRU{uv}{*}{r}{\tfrac{\Delta-3}{\Delta-4}\abs{uv}}\cap\BRU{wx}{\ell}{*}{\tfrac{\Delta-3}{\Delta-4}\abs{wx}}=\emptyset$.
\end{enumeratedlemma}
\end{lemma}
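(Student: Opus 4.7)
The six items are direct geometric facts that reduce to coordinate calculations using the formula $\ff_i=(-\tfrac{1}{2}+\tfrac{i-1}{\Delta-3},1)$. For \ref{item:br-inclusion}, the monotonicity is immediate from the definitions: increasing $\ell$ moves the left boundary of $\LBR{u}{\ell}$ weakly rightward (verify each of the four sub-cases in the piecewise definition), decreasing $r$ moves the right boundary of $\RBR{v}{r}$ weakly leftward, and shifting the apex rightward into $uv$ only contracts the respective cone. For \ref{item:br-slopes}, I would compute the x-coordinate of $u+t\ff_i$ and of the left boundary of $\LBR{u}{\ell}$ at height $u_y+t$; after simplification the comparison reduces to $(i-1)(\Delta-3)\le(\ell-1)(\Delta-4)$, which holds for $i<\ell$ when $\Delta\ge 5$, with the boundary cases ($\ell\in\set{1,\Delta-1}$ and $\Delta=4$) immediate from the definition. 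For \ref{item:br-neighbors}, the same style of computation shows that at height $v_y+t$ the x-coordinates of the left boundary of $\LBR{v}{\ell}$ and the right boundary of $\RBR{v}{r}$ differ by $\tfrac{t}{(\Delta-3)^2}\bigl[(\ell-r)(\Delta-4)-(\Delta-3)\bigr]$, which is non-negative precisely when $\ell\ge r+2$ and $\Delta\ge 5$; the case $\Delta=4$ is checked by hand.

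For \ref{item:br-induction} the critical algebraic identity is
\begin{equation*}
u+h\ff_\ell+\tfrac{h}{\Delta-4}\ff_1=u+h\bigl(\ff_\ell+\tfrac{1}{\Delta-4}\ff_1\bigr),
\end{equation*}
which says that the ray $\Ray{u'}{\ff_1}$ starting at $u'=u+h\ff_\ell$ meets the left boundary ray $\Ray{u}{\ff_\ell+\tfrac{1}{\Delta-4}\ff_1}$ of $\LBR{u}{\ell}$ at a common point sitting at height exactly $h+\tfrac{h}{\Delta-4}=\tfrac{\Delta-3}{\Delta-4}h$ above $u$. Below this height, $\Ray{u'}{\ff_1}$ lies weakly to the right of $\Ray{u}{\ff_\ell+\tfrac{1}{\Delta-4}\ff_1}$, so $\LBR{u'}{1}$ intersected with the height cap $p_y<u_y+\tfrac{\Delta-3}{\Delta-4}h$ is contained in $\LBR{u}{\ell}$. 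The symmetric statement on the right side, using $v'=v+h\ff_r$ and $\ff_{\Delta-2}$ in place of $\ff_1$, completes the argument; the two height caps agree, so the two bounding regions $\BRU{u'v'}{1}{\Delta-2}{\cdot}$ and $\BRU{uv}{\ell}{r}{\cdot}$ are compared cleanly.

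For \ref{item:br-composition} and \ref{item:br-nonneighbors} I would adapt the computation of \ref{item:br-induction} to account for a nonzero horizontal displacement between apex points. In \ref{item:br-composition}, the right boundary ray of $\RBR{v}{r'}$ (with direction $\ff_{r'}+\tfrac{1}{\Delta-4}\ff_{\Delta-2}$ since $r'\le\Delta-3$) stays weakly to the left of the right boundary ray of $\RBR{w}{r}$ up to height $(\tfrac{\Delta-3}{\Delta-4})^2\abs{vw}$; the squared factor is tuned so that the slope difference exactly absorbs the horizontal shift $\abs{vw}$. In \ref{item:br-nonneighbors}, the height cap $\tfrac{\Delta-3}{\Delta-4}\abs{uv}$ bounds the rightward extension of $\BRU{uv}{*}{r}{\cdot}$ past $v$ by at most $\abs{uv}/2$ and the leftward extension of $\BRU{wx}{\ell}{*}{\cdot}$ past $w$ by at most $\abs{wx}/2$ (using $r\le\Delta-3$ and $\ell\ge 2$ so that neither boundary is an extremal outward ray); the assumption $\abs{uv}+\abs{wx}\le 2\abs{vw}$ then separates the two regions, and the strictness of both defining inequalities (``to the left of'' and ``to the right of'' are strict) ensures the common boundary ray in the worst case belongs to neither region.

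The main obstacle is bookkeeping: each of $\LBR{u}{\ell}$ and $\RBR{v}{r}$ has four qualitatively different bounding rays depending on $\ell$ or $r$, and the formulas change shape between $\Delta=4$ and $\Delta\ge 5$, so each item splits into many short sub-cases. No single computation is hard, but extremal values of $\ell$ and $r$ cause rays to coincide or become vertical and need to be handled separately from the generic picture.
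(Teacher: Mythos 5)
Your proposal is correct and takes essentially the same route as the paper: elementary coordinate comparisons of the boundary rays, including the identical corner-coincidence identity $u+h\ff_\ell+\tfrac{h}{\Delta-4}\ff_1=u+h(\ff_\ell+\tfrac{1}{\Delta-4}\ff_1)$ for \ref{item:br-induction} and worst-case corner/extent computations for \ref{item:br-composition} and \ref{item:br-nonneighbors} (your bounds check out, e.g.\ $(i-1)(\Delta-3)\le(\ell-1)(\Delta-4)$ and the gap $\tfrac{t}{(\Delta-3)^2}[(\ell-r)(\Delta-4)-(\Delta-3)]$, and the rightward extension being at most $\abs{uv}/2$ with equality only for $\Delta=5$). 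The only cosmetic differences are organizational: the paper proves \ref{item:br-slopes} via the rewriting identities $\ff_\ell+\tfrac{1}{\Delta-4}\ff_1=\ff_{\ell-1}+\tfrac{1}{\Delta-4}\ff_{\Delta-3}$ and $\ff_r+\tfrac{1}{\Delta-4}\ff_{\Delta-2}=\ff_{r+1}+\tfrac{1}{\Delta-4}\ff_2$, deduces \ref{item:br-neighbors} from \ref{item:br-slopes} using the separating ray $\Ray{v}{\ff_{r+1}}$, and handles \ref{item:br-nonneighbors} by showing the two extreme corners coincide for $\Delta=5$ and are separated otherwise, whereas you verify the same facts by direct slope and extent comparisons.
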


\begin{proof}
Statement \ref{item:br-inclusion} clearly follows from the definition. 
Statement \ref{item:br-slopes} is implied by the definition and for $\Delta\ge 5$ by the fact that
\begin{alignat*}{3}
&\ff_{\ell}+\tfrac{1}{\Delta-4}\ff_1&&=\ff_{{\ell}-1}+\tfrac{1}{\Delta-4}\ff_{\Delta-3}&\quad&\text{for ${\ell}=2,\ldots,\Delta-2$},\\
&\ff_r+\tfrac{1}{\Delta-4}\ff_{\Delta-2}&&=\ff_{r+1}+\tfrac{1}{\Delta-4}\ff_2&\quad&\text{for $r=1,\ldots,\Delta-3$}.
\end{alignat*}
Statement \ref{item:br-slopes} directly yields \ref{item:br-neighbors}: the vector at $v$ in direction $\ff_{r+1}$ points outside both $\BR{uv}{*}{r}$ and $\BR{vw}{\ell}{*}$. 
To see \ref{item:br-induction}, note that the point $u'+\tfrac{h}{\Delta-4}\ff_1$, which is the top-left corner of $\BRU{u'v'}{1}{\Delta-2}{\tfrac{h}{\Delta-4}}$, equals $u+h(\ff_\ell+\tfrac{1}{\Delta-4}\ff_1)$. 
Hence, it lies at the top-left corner of $\BRU{uv}{\ell}{r}{\tfrac{\Delta-3}{\Delta-4}h}$. 
Similarly, the top-right corner of $\BRU{u'v'}{1}{\Delta-2}{\tfrac{h}{\Delta-4}}$ lies at the top-right corner of $\BRU{uv}{\ell}{r}{\tfrac{\Delta-3}{\Delta-4}h}$. 
To prove \ref{item:br-composition}, it suffices to consider the case $r=1$ and $r'=\Delta-3$. The top-right corner of $\BRU{uv}{\ell}{r'}{(\tfrac{\Delta-3}{\Delta-4})^2\abs{vw}}$ is
\begin{equation*}
\begin{split}
v+\tfrac{\Delta-3}{\Delta-4}\abs{vw}(\ff_{\Delta-3}+\tfrac{1}{\Delta-4}\ff_{\Delta-2})&=v+\abs{vw}(\ff_{\Delta-2}+\tfrac{1}{\Delta-4}\ff_1+\tfrac{\Delta-3}{\Delta-4}\cdot\tfrac{1}{\Delta-4}\ff_{\Delta-2})\\
&=w+\tfrac{\Delta-3}{\Delta-4}\abs{vw}(\ff_1+\tfrac{1}{\Delta-4}\ff_{\Delta-2}).
\end{split}
\end{equation*}
Therefore, it lies on the right side of $\BR{uw}{\ell}{1}$, and the conclusion of \ref{item:br-composition} follows. 
Finally, for the proof of \ref{item:br-nonneighbors}, it suffices to consider the case $\ell=2$, $r=\Delta-3$, and $\abs{uv}=\abs{vw}=\abs{wx}=\lambda$. 
The top-right corner of $\BRU{uv}{*}{\Delta-3}{\tfrac{\Delta-3}{\Delta-4}\lambda}$ and the top-left corner of $\BRU{wx}{2}{*}{\tfrac{\Delta-3}{\Delta-4}\lambda}$ are respectively
\begin{alignat*}{5}
v&+\lambda(\ff_{\Delta-3}&&+\tfrac{1}{\Delta-4}\ff_{\Delta-2})&&={}&v&+\lambda(\ff_{\Delta-2}&&+\tfrac{1}{\Delta-4}\ff_2),\\
w&+\lambda(\ff_2&&+\tfrac{1}{\Delta-4}\ff_1)&&={}&w&+\lambda(\ff_1&&+\tfrac{1}{\Delta-4}\ff_{\Delta-3}).
\end{alignat*}
They coincide if $\Delta=5$, otherwise the former lies to the left of the latter.
\end{proof}

\section{The drawing}

We present the construction of a drawing first for $\Delta\ge 5$ and then for $\Delta=4$. 
Both constructions follow the same idea but differ in technical details. 
The difference comes from the fact that any bubble can be drawn inside a bounding region of bounded height (independent of the size of the bubble) when $\Delta\ge 5$, but not when $\Delta=4$. 

\begin{lemma}\label{lem:Delta>=5}
Suppose\/ $\Delta\ge 5$.
\begin{enumeratedlemma}
\item\label{item:1-bubble}
Let\/ $H$ be a\/ $1$-bubble with root\/ $v$ such that\/ $d_H(v)\le\Delta-1$. 
Suppose that the position of\/ $v$ is fixed. 
Let\/ $\ell$ and\/ $r$ be such that\/ $0\le\ell,r\le\Delta-1$ and\/ $r-\ell+1=d_H(v)$. 
It follows that there is a straight-line drawing of\/ $H$ inside\/ $\BR{v}{\ell}{r}$.
\item\label{item:2-bubble}
Let\/ $H$ be a\/ $2$-bubble with first root\/ $u$ and second root\/ $v$ such that\/ $d_H(u),d_H(v)\le\Delta-1$. 
Suppose that the positions of\/ $u$ and\/ $v$ are fixed on a horizontal line in this order from left to right. 
Let\/ $\ell=\Delta-d_H(u)$ and\/ $r=d_H(v)-1$. 
It follows that there is a straight-line drawing of\/ $H$ inside\/ $\BRU{uv}{\ell}{r}{\tfrac{\Delta-3}{\Delta-4}\abs{uv}}$ such that 
the root-path of\/ $H$ is drawn as the segment\/ $uv$.
\item\label{item:k-bubble}
Let\/ $H$ be a\/ $k$-bubble with root-sequence\/ $v_1,\ldots,v_k$. 
If\/ $k=1$, then suppose\/ $d_H(v_1)\le\Delta-2$, otherwise suppose\/ $d_H(v_1),d_H(v_k)\le\Delta-1$. 
Suppose that the positions of\/ $v_1,\ldots,v_k$ are fixed in this order from left to right on a horizontal line so that\/ $\abs{v_1v_2}=\ldots=\abs{v_{k-1}v_k}=\lambda$, for some\/ $\lambda>0$. 
It follows that there is a straight-line drawing of\/ $H$ inside\/ $\BRU{v_1v_k}{1}{\Delta-2}{\tfrac{\Delta-3}{\Delta-4}\lambda}$ such that the root-path of\/ $H$ is drawn as the segment\/ $v_1v_k$.
\end{enumeratedlemma}
The drawings claimed above use only slopes from\/ $S$ and preserve the order of edges around each vertex\/ $w$ of\/ $H$ under the assumption that all edges connecting\/ $w$ to\/ $G-H$ (if exist) are drawn in the correct order outside the considered bounding region.
\end{lemma}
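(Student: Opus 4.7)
The plan is a simultaneous induction on $|V(H)|$ for all three statements~\ref{item:1-bubble},~\ref{item:2-bubble}, and~\ref{item:k-bubble}. The base cases are bubbles consisting only of their root-path: a single-vertex $1$-bubble is drawn as that vertex, and a $k$-bubble or $2$-bubble with empty splitting sequence is drawn as the horizontal segment joining its endpoints. In each inductive step I peel off one structural layer using Lemma~\ref{lem:split} or Lemma~\ref{lem:induction-structure} and invoke the three statements on strictly smaller bubbles. Throughout, at every vertex the edges are drawn in increasing slope order from left to right, which yields the order-preservation clause automatically.

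For~\ref{item:k-bubble}, I place $v_1,\ldots,v_k$ on the horizontal line with equal spacing $\lambda$ and draw the root-path as the segment $v_1v_k$. Lemma~\ref{lem:split} produces the splitting sequence $(H_1,\ldots,H_b)$. At each root $v_i$ I partition the slopes $\ff_1,\ldots,\ff_{\Delta-2}$ into consecutive ranges, one per v-bubble at $v_i$, summing to the non-horizontal degree at $v_i$; each v-bubble is drawn by~\ref{item:1-bubble} inside the matching $\BR{v_i}{\ell'}{r'}$. Each e-bubble $H_j$ between $v_i$ and $v_{i+1}$ is drawn by~\ref{item:2-bubble} with $\ell=\Delta-d_{H_j}(v_i)$ and $r=d_{H_j}(v_{i+1})-1$. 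Non-overlap between sub-regions at consecutive roots follows from~\ref{item:br-neighbors}, while non-overlap between non-consecutive ones (needed when a trivial e-bubble contributes nothing but the slope wedges at $v_i$ and $v_{i+1}$ could otherwise conflict) follows from~\ref{item:br-nonneighbors}. Containment of everything in $\BRU{v_1v_k}{1}{\Delta-2}{\tfrac{\Delta-3}{\Delta-4}\lambda}$ is established using~\ref{item:br-inclusion} and~\ref{item:br-composition}.

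For~\ref{item:1-bubble}, I first split $H$ into a bouquet of v-bubbles at $v$ if it is not itself a single v-bubble, and allot each v-bubble a consecutive slope range at $v$. For each single v-bubble, Lemma~\ref{lem:induction-structure}\ref{item:v-bubble-without-a-root} expresses $H-v$ as a $d$-bubble with root-sequence $v^1,\ldots,v^d$ where $d=d_H(v)=r-\ell+1$. I draw each $vv^i$ along $\ff_{\ell+i-1}$ and terminate it on a horizontal line at some height $h$; because $\ff_j=(-\tfrac{1}{2}+\tfrac{j-1}{\Delta-3},1)$, the points $v^1,\ldots,v^d$ land with equal spacing $\lambda=\tfrac{h}{\Delta-3}$. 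Applying~\ref{item:k-bubble} to $H-v$ places that drawing inside $\BRU{v^1v^d}{1}{\Delta-2}{\tfrac{h}{\Delta-4}}$, which by~\ref{item:br-induction} lies inside $\BRU{v}{\ell}{r}{\tfrac{\Delta-3}{\Delta-4}h}\subset\BR{v}{\ell}{r}$.

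For~\ref{item:2-bubble}, Lemma~\ref{lem:induction-structure}\ref{item:e-bubble} expresses $H-\set{u,v}$ as a single bubble with root-sequence $u^1,\ldots,u^k,v^1,\ldots,v^\ell$. I draw $uv$ horizontally, then draw the $k$ edges $uu^i$ using the rightmost $k$ slopes consistent with $\ell=\Delta-d_H(u)$ and the $\ell$ edges $vv^i$ using the leftmost $\ell$ slopes consistent with $r=d_H(v)-1$, choosing a common termination height $h$ so that all $u^i$ and $v^i$ land on one horizontal line with uniform spacing matching what~\ref{item:k-bubble} requires. If the inner face containing $uv$ is a triangle, $u^k=v^1$ and a direct computation with the formulas for $\ff_j$ confirms the two rays meet at the prescribed point; otherwise the two are distinct with the correct spacing. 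An application of~\ref{item:k-bubble} to $H-\set{u,v}$ completes the step, with containment in $\BRU{uv}{\ell}{r}{\tfrac{\Delta-3}{\Delta-4}\abs{uv}}$ coming from~\ref{item:br-induction} and~\ref{item:br-composition}. The main obstacle throughout is exactly this geometric bookkeeping: the children's bounding regions must fit tightly inside the parent's, and the height factor $\tfrac{\Delta-3}{\Delta-4}$ in~\ref{item:k-bubble} is calibrated precisely so that the chain of inclusions from~\ref{item:br-induction},~\ref{item:br-composition}, and~\ref{item:br-nonneighbors} closes up over every configuration of the splitting sequence.
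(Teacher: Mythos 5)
There is a genuine gap, and it sits at the heart of the construction: your proof of \ref{item:1-bubble} only covers the case $1\le\ell\le r\le\Delta-2$, where every edge of $H$ at $v$ can be drawn with one of the upward vectors $\ff_\ell,\ldots,\ff_r$. But \ref{item:1-bubble} is stated (and is needed in the induction, e.g.\ already for a root of degree $\Delta-1$, where $(\ell,r)$ must be $(0,\Delta-2)$ or $(1,\Delta-1)$, and for the final application to $G$ itself) also when $\ell=0$ or $r=\Delta-1$. In that situation only $\Delta-2$ upward slopes are available for $d_H(v)$ edges inside $\LBR{v}{\ell}\cap\RBR{v}{r}$, so one edge at $v$ must be drawn \emph{horizontally}; worse, its other endpoint may have degree $\Delta$, forcing a further horizontal edge, and so on. Handling this requires the whole machinery you never invoke: peeling off a maximal horizontal path $w_0\ldots w_n$ of degree-$\Delta$ non-cut vertices along the outer face, applying Lemma \ref{lem:induction-structure}\ref{item:v-bubble-without-a-path} (which you never use) to the bubble spanning over that path, choosing the spacings $\abs{w_iw_{i+1}}$ according to whether the face at $w_iw_{i+1}$ is a triangle, and placing a leftover $1$-bubble at $w_n$ to the right of the last non-horizontal edge. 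The same omission resurfaces in your $\ell=r$ subcase: you apply \ref{item:k-bubble} with $k=1$ to $H-v$, but that requires $d_{H-v}$ of the root to be at most $\Delta-2$, which can fail; the correct fix again uses \ref{item:1-bubble} with a region permitting a horizontal escape edge ($\ell=0$ or $r=\Delta-1$), i.e.\ exactly the case you skipped. Since drawing edges ``in increasing slope order'' with only the $\ff_i$'s is impossible here, this is not bookkeeping but a missing idea.

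A second, smaller gap: your argument for \ref{item:2-bubble} treats only the case where $H$ is a single e-bubble (you apply Lemma \ref{lem:induction-structure}\ref{item:e-bubble}, which presupposes this). A general $2$-bubble may have a splitting sequence with several e-bubbles (and bridges, which need a separate small-offset argument). There the positions of the intermediate root-path vertices on the segment $uv$ are free parameters that must be chosen to make the inclusions work: the paper peels off the first e-bubble and fixes $w_1$ so that $\alpha\abs{w_0w_1}=\tfrac{\Delta-3}{\Delta-4}\abs{w_1w_n}$, which is precisely what makes \ref{item:br-composition} applicable and keeps the first region inside $\RBR{v}{r}$ and under the height bound $\tfrac{\Delta-3}{\Delta-4}\abs{uv}$. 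Declaring that the constants are ``calibrated precisely so that the chain of inclusions closes up'' names the difficulty rather than resolving it; without specifying and verifying this choice (and without equal spacings, \ref{item:br-nonneighbors} need not even apply), the containment claims in \ref{item:2-bubble} and hence in \ref{item:k-bubble} are unproved.
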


\begin{proof}
The proof constructs the required drawing by induction on the size of $H$. 
That is, to prove any of \ref{item:1-bubble}--\ref{item:k-bubble} for a bubble $H$, we assume that the entire lemma holds for any bubble with fewer vertices than $H$ has. 
The construction we are going to describe clearly preserves the order of edges at every vertex of $H$ and uses only slopes from $S$, and we do not explicitly state this observation anywhere further in the proof. 

\begin{enumeratedlemmaproofitem}{item:1-bubble}
We consider several cases depending on the values of $\ell$ and $r$ and on whether $H$ is a single v-bubble or a bouquet of several v-bubbles.

\begin{case}
\textit{$\ell>r$.}

In this case $v$ is the only vertex of $H$ and the statement is trivial. 
\end{case}

\begin{case}\label{case:v-bubble-simple}
\textit{$H$ is a v-bubble and\/ $1\le\ell\le r\le\Delta-2$.}

Define $H'=H-v$, and let $v^\ell,\ldots,v^r$ be the neighbors of $v$ in $H$ from left to right. 
By \ref{item:v-bubble-without-a-root}, the graph $H'$ is an $(r-\ell+1)$-bubble with root-sequence $v^\ell,\ldots,v^r$. 
Put each vertex $v^i$ at point $v+\ff_i$. 
Consider two subcases. 

\begin{subcase}\label{subcase:v-bubble-simple-l=r}
\textit{$1\le\ell=r\le\Delta-2$.}

By the induction hypothesis \ref{item:1-bubble}, the $1$-bubble $H'$ can be drawn inside $\BR{v^r}{0}{d_{H'}(v^r)-1}$ as well as inside $\BR{v^r}{\Delta-d_{H'}(v^r)}{\Delta-1}$. 
Choose the former drawing if $\ell=r=\Delta-2$, the latter if $\ell=r=1$, or any of the two otherwise. 
After appropriate scaling, the chosen drawing fits within $\BR{v}{r}{r}$. 
\end{subcase}

\begin{subcase}\label{subcase:v-bubble-simple-l<r}
\textit{$1\le\ell<r\le\Delta-2$.}

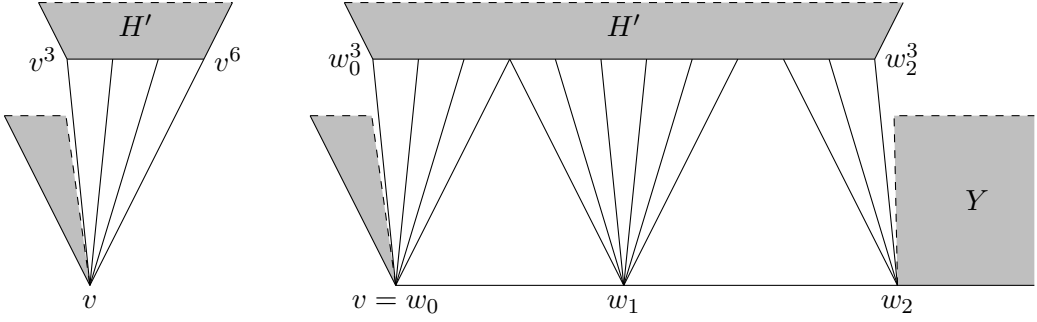
\begin{figure}[t]
\begin{center}
\begin{tikzpicture}[baseline,scale=3]
    \begin{scope}[scale=0.6]
    \fill[color=black!25] (0,0)--(-0.625,1.25)--(-0.175,1.25)--cycle;
    \draw (0,0)--(-0.625,1.25);
    \draw[dashed] (-0.625,1.25)--(-0.175,1.25)--(0,0);
    \end{scope}
    \fill[color=black!25] (-0.1,1)--(-0.225,1.25)--(0.625,1.25)--(0.5,1)--cycle;
    \draw (0,0)--(-0.1,1)--(-0.225,1.25);
    \draw (0,0)--(0.1,1);
    \draw (0,0)--(0.3,1);
    \draw (0,0)--(0.625,1.25);
    \draw (-0.1,1)--(0.5,1);
    \draw[dashed] (-0.225,1.25)--(0.625,1.25);
    \node[below] at (0,0) {$v$};
    \node[left] at (-0.1,1) {$v^3$};
    \node[right] at (0.5,1) {$v^6$};
    \node at (0.2,1.15) {$H'$};
\end{tikzpicture}
\hspace{.5cm}
\begin{tikzpicture}[baseline,scale=3]
    \begin{scope}[scale=0.6]
    \fill[color=black!25] (0,0)--(-0.625,1.25)--(-0.175,1.25)--cycle;
    \draw (0,0)--(-0.625,1.25);
    \draw[dashed] (-0.625,1.25)--(-0.175,1.25)--(0,0);
    \end{scope}
    \fill[color=black!25] (-0.1,1)--(-0.225,1.25)--(2.225,1.25)--(2.1,1)--cycle;
    \begin{scope}[xshift=2.2cm,scale=0.6]
    \fill[color=black!25] (0,0)--(-0.025,1.25)--(1,1.25)--(1,0)--cycle;
    \draw[dashed] (0,0)--(-0.025,1.25)--(1,1.25);
    \end{scope}
    \draw (0,0)--(2.8,0);
    \draw (0,0)--(-0.1,1)--(-0.225,1.25);
    \draw (0,0)--(0.1,1);
    \draw (0,0)--(0.3,1);
    \draw (0,0)--(0.5,1);
    \draw (1,0)--(0.5,1);
    \draw (1,0)--(0.7,1);
    \draw (1,0)--(0.9,1);
    \draw (1,0)--(1.1,1);
    \draw (1,0)--(1.3,1);
    \draw (1,0)--(1.5,1);
    \draw (2.2,0)--(1.7,1);
    \draw (2.2,0)--(1.9,1);
    \draw (2.2,0)--(2.1,1)--(2.225,1.25);
    \draw (-0.1,1)--(2.1,1);
    \draw[dashed] (-0.225,1.25)--(2.225,1.25);
    \node[below] at (0,0) {$v=w_0$};
    \node[below] at (1,0) {$w_1$};
    \node[below] at (2.2,0) {$w_2$};    
    \node[left] at (-0.1,1) {$w_0^3$};
    \node[right] at (2.1,1) {$w_2^3$};
    \node at (1,1.15) {$H'$};
    \node at (2.55,0.375) {$Y$};
\end{tikzpicture}
\caption{Sample drawings of $1$-bubbles for $\Delta=8$, illustrating Subcase \ref{subcase:v-bubble-simple-l<r} (left) and Subcase \ref{subcase:v-bubble-complex-l<r} (right) of the proof of \ref{item:1-bubble}. 
The gray areas labeled $H'$ and $Y$ denote bounding regions of the bubbles $H'$ and $Y$. 
The unlabeled gray areas are where other parts of a $1$-bubble may be added by repeating the procedure in Case \ref{case:1-bubble-ind} of the proof of \ref{item:1-bubble}.}
\label{fig:1-bubble}
\end{center}
\end{figure}

It follows that $v^\ell,\ldots,v^r$ lie on a common horizontal line $L$ and partition $L$ into segments of length $\tfrac{1}{\Delta-3}$. 
Apply the induction hypothesis \ref{item:k-bubble} to draw $H'$ inside $\BRU{v^\ell v^r}{1}{\Delta-2}{\tfrac{1}{\Delta-4}}$ (see Figure \ref{fig:1-bubble}, left). 
It follows from \ref{item:br-induction} that this bounding region is contained in $\BR{v}{\ell}{r}$. 
\end{subcase}
\end{case}

\begin{case}\label{case:v-bubble-complex}
\textit{$H$ is a v-bubble, $0\le\ell\le r\le\Delta-1$, and\/ $\ell=0$ or\/ $r=\Delta-1$.}

As $d_H(v)\le\Delta-1$, the cases $\ell=0$ and $r=\Delta-1$ cannot hold simultaneously. 
Therefore, by symmetry, it is enough to consider only the case that $1\le\ell\le r=\Delta-1$. 
Consider two subcases of the latter. 

\begin{subcase}\label{subcase:v-bubble-complex-l=r}
\textit{$\ell=r=\Delta-1$.}

It follows that $v$ has only one neighbor in $H$, say $w$, and $H'=H-v$ is a $1$-bubble rooted at $w$. 
Put $w$ horizontally to the right of $v$. 
Draw $H'$ inside $\BR{w}{\Delta-d_{H'}(w)}{\Delta-1}$ by the induction hypothesis \ref{item:1-bubble}, scaling the drawing appropriately to fit it within $\BR{v}{\Delta-1}{\Delta-1}$. 
\end{subcase}

\begin{subcase}\label{subcase:v-bubble-complex-l<r}
\textit{$1\le\ell<r=\Delta-1$.}

It follows that $v$ has at least two neighbors in $H$. 
Let $P=w_0\ldots w_n$ be the simple path of length $n\ge 1$ that stars at $w_0=v$ with the rightmost edge and continues counterclockwise along the outer face of $H$ so that
\begin{itemize}
\item the vertices $w_1,\ldots,w_{n-1}$ have degree $\Delta$ and are not cut-vertices in $H$,
\item the vertex $w_n$ has degree at most $\Delta-1$ or is a cut-vertex in $H$.
\end{itemize}
Note that the first condition is satisfied vacuously if $n=1$. 
Since the degrees of $w_1,\ldots,w_{n-1}$ are at least $3$ and by outerplanarity, $P$ is an induced path. 
Therefore, by \ref{item:v-bubble-without-a-path}, the graph $H-P$ has exactly one component $H'$ adjacent to both $w_0$ and $w_n$. 
All other components of $H-P$ are adjacent to $w_n$. 
Together with $w_n$ they form a (possibly trivial) $1$-bubble $Y$ rooted at $w_n$. 
Let $X$ denote the subgraph of $H$ induced on $w_0,\ldots,w_n$ and the vertices of $H'$. 
Define $r_X=d_X(w_n)-1$ and $\ell_Y=\Delta-d_Y(w_n)$. 
Let $w_0^{\ell},\ldots,w_0^{\Delta-2},w_1$ be the neighbors of $w_0$ in $X$ ordered clockwise. 
Let $w_{i-1},w_i^1,\ldots,w_i^{\Delta-2},w_{i+1}$ be the neighbors of $w_i$ in $X$ ordered clockwise, for $1\le i\le n-1$. 
Let $w_{n-1},w_n^1,\ldots,w_n^{r_X}$ be the neighbors of $w_n$ in $X$ ordered clockwise. 
It follows from \ref{item:v-bubble-without-a-path} that $H'$ is a bubble with root-sequence $w_0^{\ell},\ldots,w_0^{\Delta-2},\allowbreak w_1^1,\ldots,w_1^{\Delta-2},\ldots,\allowbreak w_n^1,\ldots,w_n^{r_X}$ in which $w_i^{\Delta-2}$ and $w_{i+1}^1$ coincide whenever the inner face of $H$ containing $w_iw_{i+1}$ is a triangle. 
For $i=0,\ldots,n-1$, define
\begin{equation*}
\lambda_i=\begin{cases}
1&\text{if $w_i^{\Delta-2}=w_{i+1}^1$},\\
\tfrac{\Delta-2}{\Delta-3}&\text{if $w_i^{\Delta-2}\ne w_{i+1}^1$}.
\end{cases}
\end{equation*}
Put the vertices $w_1,\ldots,w_n$ in this order from left to right on the horizontal line going through $w_0$ in such a way that $\abs{w_iw_{i+1}}=\lambda_i$ for $0\le i\le n-1$. 
Put each vertex $w_i^j$ at point $w_i+\ff_j$. 
Note that if $w_i^{\Delta-2}$ and $w_{i+1}^1$ are the same vertex, then they correctly end up at the same point. 
All $w_i^j$ lie on a common horizontal line $L$ at distance $1$ above the segment $w_0w_n$ and partition $L$ into segments of length $\tfrac{1}{\Delta-3}$. 
Define
\begin{gather*}
B_X=\BRU{w_0w_n}{\ell}{r_X}{\tfrac{\Delta-3}{\Delta-4}},\\
B_Y=\BRU{w_n}{\ell_Y}{\Delta-1}{1}.
\end{gather*}
Draw $H'$ inside $\BRU{w_0^{\ell}w_n^{r_X}}{1}{\Delta-2}{\tfrac{1}{\Delta-4}}$ using the induction hypothesis \ref{item:k-bubble} (see Figure \ref{fig:1-bubble}, right). 
Note that if $H'$ is a $1$-bubble, then $n=1$ and the root of $H'$ has at least two edges outside $H'$ (the ones going to $w_0$ and $w_1$), so \ref{item:k-bubble} can indeed be applied. 
By \ref{item:br-induction}, this bounding region is contained in $B_X$. 
Draw $Y$ inside $B_Y$ using the induction hypothesis and scaling. 
This way it lies entirely below the line $L$. 
By \ref{item:br-slopes}, the drawing of $Y$ lies to the right of the edge $w_nw_n^{r_X}$ and thus does not overlap with the drawing of $X$. 
Clearly, $B_X$ and $B_Y$ are contained in $\BR{w_0}{\ell}{\Delta-1}$. 
\end{subcase}
\end{case}

\begin{case}\label{case:1-bubble-ind}
\textit{$H$ consists of at least two v-bubbles.}

Let $(H_1,\ldots,H_b)$ be the splitting sequence of $H$. 
Thus all $H_1,\ldots,H_b$ are v-bubbles and $b\ge 2$. 
Define $X=(H_2,\ldots,H_b)$, $r_1=\ell+d_{H_1}(v)-1$, and $\ell_X=\Delta-d_X(v)$. 
By the induction hypothesis \ref{item:1-bubble}, we can draw $H_1$ inside $\BR{v}{\ell}{r_1}$ and $X$ inside $\BR{v}{\ell_X}{r}$. 
We scale the drawing of $H_1$ to make it so small that it lies entirely below the horizontal lines determined by all the vertices of $X$ other than $v$ and those lying on the horizontal line passing through $v$ (see Figure \ref{fig:1-bubble}). 
Since $r_1+1=\ell_X$ and by \ref{item:br-slopes}, our scaled drawing of $H_1$ lies to the left of the leftmost edge at the root $v$ of $X$. 
Thus the drawings of $H_1$ and $X$ do not overlap. 
By \ref{item:br-inclusion}, they both fit within $\BR{uv}{\ell}{r}$. 
\end{case}
\end{enumeratedlemmaproofitem}

\begin{enumeratedlemmaproofitem}{item:2-bubble}
We consider several cases depending on the bubbles forming the splitting sequence of $H$. 
The cases are not pairwise disjoint, but they cover all possible situations. 

\begin{case}\label{case:bridge}
\textit{The splitting sequence of\/ $H$ contains a trivial single-edge e-bubble.}

Let $(H_1,\ldots,H_b)$ be the splitting sequence of $H$, and let $H_i$ be a trivial e-bubble. 
Let $u_i$ and $v_i$ be the first and the second roots of $H_i$, respectively. 
That is, $H_i$ consists only of the edge $u_iv_i$.
Define $X=(H_1,\ldots,H_{i-1})$ and $Y=(H_{i+1},\ldots,H_b)$.
If $i=1$, then $X$ is a trivial one-vertex bubble, while if $i=b$, then $Y$ is a trivial one-vertex bubble. 
Choose a small $\lambda>0$. 
If $X$ is a $1$-bubble, then $u_i=u$, and otherwise put $u_i$ on the segment $uv$ so that $\abs{uu_i}=\lambda$. 
Similarly, if $Y$ is a $1$-bubble, then $v_i=v$, and otherwise put $v_i$ on $uv$ so that $\abs{v_iv}=\lambda$. 
Draw $X$ inside $\BRU{uu_i}{\ell}{d_X(u_i)-1}{\frac{\Delta-3}{\Delta-4}\lambda}$ and $Y$ inside $\BRU{v_iv}{\Delta-d_Y(v_i)}{r}{\frac{\Delta-3}{\Delta-4}\lambda}$ using the induction hypothesis \ref{item:1-bubble} or \ref{item:2-bubble}. 
Clearly, both bounding regions are contained in $\BRU{uv}{\ell}{r}{\tfrac{\Delta-3}{\Delta-4}\abs{uv}}$. 
Moreover, if $\lambda$ has been chosen small enough, then the two bounding regions are disjoint. 
\end{case}

\begin{case}\label{case:e-bubble}
\textit{$H$ is a non-trivial e-bubble.}

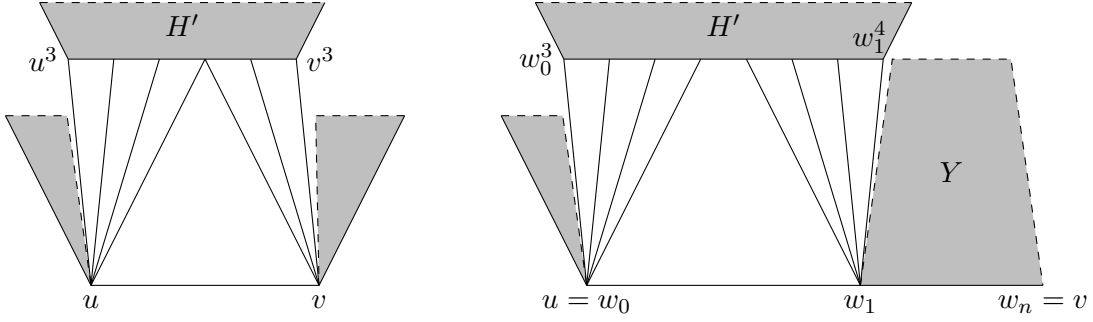
\begin{figure}[t]
\begin{center}
\begin{tikzpicture}[baseline,scale=3]
    \begin{scope}[scale=0.6]
    \fill[color=black!25] (0,0)--(-0.625,1.25)--(-0.175,1.25)--cycle;
    \draw (0,0)--(-0.625,1.25);
    \draw[dashed] (-0.625,1.25)--(-0.175,1.25)--(0,0);
    \end{scope}
    \begin{scope}[xshift=1cm,scale=0.6]
    \fill[color=black!25] (0,0)--(0.625,1.25)--(-0.025,1.25)--cycle;
    \draw (0,0)--(0.625,1.25);
    \draw[dashed] (0.625,1.25)--(-0.025,1.25)--(0,0);
    \end{scope}
    \fill[color=black!25] (-0.1,1)--(-0.225,1.25)--(1.025,1.25)--(0.9,1)--cycle;
    \draw (0,0)--(1,0);
    \draw (0,0)--(-0.1,1)--(-0.225,1.25);
    \draw (0,0)--(0.1,1);
    \draw (0,0)--(0.3,1);
    \draw (0,0)--(0.5,1);
    \draw (1,0)--(0.5,1);
    \draw (1,0)--(0.7,1);
    \draw (1,0)--(0.9,1)--(1.025,1.25);
    \draw (-0.1,1)--(0.9,1);
    \draw[dashed] (-0.225,1.25)--(1.025,1.25);
    \node[below] at (0,0) {$u$};
    \node[below] at (1,0) {$v$};
    \node[left] at (-0.1,1) {$u^3$};
    \node[right] at (0.9,1) {$v^3$};
    \node at (0.4,1.15) {$H'$};
\end{tikzpicture}
\hspace{1cm}
\begin{tikzpicture}[baseline,scale=3]
    \begin{scope}[scale=0.6]
    \fill[color=black!25] (0,0)--(-0.625,1.25)--(-0.175,1.25)--cycle;
    \draw (0,0)--(-0.625,1.25);
    \draw[dashed] (-0.625,1.25)--(-0.175,1.25)--(0,0);
    \end{scope}
    \begin{scope}[xshift=1.2cm,scale=0.8]
    \fill[color=black!25] (0,0)--(0.175,1.25)--(0.825,1.25)--(1,0)--cycle;
    \draw[dashed] (0,0)--(0.175,1.25)--(0.825,1.25)--(1,0);
    \end{scope}
    \fill[color=black!25] (-0.1,1)--(-0.225,1.25)--(1.425,1.25)--(1.3,1)--cycle;
    \draw (0,0)--(2,0);
    \draw (0,0)--(-0.1,1)--(-0.225,1.25);
    \draw (0,0)--(0.1,1);
    \draw (0,0)--(0.3,1);
    \draw (0,0)--(0.5,1);
    \draw (1.2,0)--(0.7,1);
    \draw (1.2,0)--(0.9,1);
    \draw (1.2,0)--(1.1,1);
    \draw (1.2,0)--(1.3,1)--(1.425,1.25);
    \draw (-0.1,1)--(1.3,1);
    \draw[dashed] (-0.225,1.25)--(1.425,1.25);
    \node[below] at (0,0) {$u=w_0$};
    \node[below] at (1.2,0) {$w_1$};
    \node[below] at (2,0) {$w_n=v$};
    \node[left] at (-0.1,1) {$w_0^3$};
    \node[above,inner sep=2pt] at (1.3,1) {$w_1^4$\kern 10pt};
    \node at (0.6,1.15) {$H'$};
    \node at (1.6,0.5) {$Y$};
\end{tikzpicture}
\caption{Sample drawings of $2$-bubbles for $\Delta=8$, illustrating Case \ref{case:e-bubble} (left, with $u^6=v^1$) and Case \ref{case:composition} (right, with $w_0^6\ne w_1^1$) of the proof of \ref{item:2-bubble}. 
The gray areas labeled $H'$ and $Y$ denote bounding regions of the bubbles $H'$ and $Y$. 
The unlabeled gray areas are where initial and final v-bubbles may be added by repeating the procedure in Case \ref{case:2-bubble-ind} of the proof of \ref{item:2-bubble}.}
\label{fig:2-bubble}
\end{center}
\end{figure}

It follows that $d_H(u),d_H(v)\ge 2$ and therefore $\ell\le\Delta-2$ and $r\ge 1$. 
Let $u^\ell,\ldots,u^{\Delta-2},v$ be the neighbors of $u$ in $H$ ordered clockwise, and let $u,v^1,\ldots,v^r$ be the neighbors of $v$ in $H$ ordered clockwise. 
Let $H'=H-\set{u,v}$. 
By \ref{item:e-bubble}, the graph $H'$ is a bubble with root-sequence $u^\ell,\ldots,u^{\Delta-2},\allowbreak v^1,\ldots,v^r$ in which $u^{\Delta-2}$ and $v^1$ coincide if the inner face of $H$ containing $uv$ is a triangle. 
Define 
\begin{equation*}
h=\begin{cases}
\abs{uv}&\text{if $u^{\Delta-2}=v^1$},\\
\frac{\Delta-3}{\Delta-2}\abs{uv}&\text{if $u^{\Delta-2}\ne v^1$}.
\end{cases}
\end{equation*}
Put each vertex $u^i$ at point $u+h\ff_i$ and each vertex $v^i$ at point $v+h\ff_i$. 
Note that if $u^{\Delta-2}$ and $v^1$ are the same vertex, then they correctly end up at the same point. 
All $u^i$ and $v^i$ lie on a common horizontal line and partition it into segments of length $\tfrac{h}{\Delta-3}$. 
Draw $H'$ inside $\BRU{u^\ell v^r}{1}{\Delta-2}{\tfrac{h}{\Delta-4}}$ using the induction hypothesis \ref{item:k-bubble} (see Figure \ref{fig:2-bubble}, left). 
This bounding region is contained in $\BRU{uv}{\ell}{r}{\tfrac{\Delta-3}{\Delta-4}h}$ by \ref{item:br-induction}. 
Since $h\le\abs{uv}$, we have $\BRU{uv}{\ell}{r}{\tfrac{\Delta-3}{\Delta-4}h}\subset\BRU{uv}{\ell}{r}{\tfrac{\Delta-3}{\Delta-4}\abs{uv}}$.
\end{case}

\begin{case}\label{case:composition}
\textit{The splitting sequence of\/ $H$ starts with a non-trivial e-bubble and contains some other e-bubbles but no trivial e-bubbles.}

Let $(H_1,\ldots,H_b)$ be the splitting sequence of $H$ and $w_0\ldots w_n$ be the root-path of $H$. 
Thus $b\ge 2$, $w_0=u$, $w_n=v$, and $n\ge 2$. 
We split $H$ into the e-bubble $H_1$ with roots $w_0$ and $w_1$ and the rest $Y=(H_2,\ldots,H_b)$ being a $2$-bubble with roots $w_1$ and $w_n$. 
Define $r_1=d_{H_1}(w_1)-1$ and $\ell_Y=\Delta-d_Y(w_1)$. 
Since $H_1,\ldots,H_b$ are non-trivial, we have $d_{H_1}(w_0),d_{H_1}(w_1),d_Y(w_1),d_Y(w_n)\ge 2$ and therefore $\ell\le\Delta-2$, $1\le r_1\le\Delta-3$, $2\le\ell_Y\le\Delta-2$, and $r\ge 1$. 
Let $w_0^\ell,\ldots,w_0^{\Delta-2},w_1$ be the neighbors of $w_0$ in $H_1$ ordered clockwise. 
Similarly, let $w_0,w_1^1,\ldots,w_1^{r_1}$ be the neighbors of $w_1$ in $H_1$ ordered clockwise. 
Note that $w_0^{\Delta-2}$ and $w_1^1$ are the same vertex if the inner face of $H$ containing $w_0w_1$ is a triangle. 
Define
\begin{equation*}
\alpha=\begin{cases}
1&\text{if $w_0^{\Delta-2}=w_1^1$,}\\
\tfrac{\Delta-3}{\Delta-2}&\text{if $w_0^{\Delta-2}\ne w_1^1$.}
\end{cases}
\end{equation*}
Fix the position of $w_1$ on the segment $w_0w_n$ so that $\alpha\abs{w_0w_1}=\tfrac{\Delta-3}{\Delta-4}\abs{w_1w_n}=h$. 
Put each vertex $w_0^i$ at point $w_0+h\ff_i$ and each vertex $w_1^i$ at point $w_1+h\ff_i$. 
Note that if $w_0^{\Delta-2}$ and $w_1^1$ are the same vertex, then they correctly end up at the same point. 
All $w_0^i$ and $w_1^i$ lie on a common horizontal line $L$ at distance $h$ to the segment $w_0w_n$ and partition $L$ into segments of length $\tfrac{h}{\Delta-3}$. 
Define
\begin{gather*}
B_1=\BRU{w_0w_1}{\ell}{r_1}{\tfrac{\Delta-3}{\Delta-4}h},\\
B_Y=\BRU{w_1w_n}{\ell_Y}{r}{h}.
\end{gather*}
Let $H'=H_1-\set{w_0,w_1}$. 
By \ref{item:e-bubble}, the graph $H'$ is a bubble with root-sequence $w_0^\ell,\ldots,w_0^{\Delta-2},\allowbreak w_1^1,\ldots,w_1^{r_1}$ in which $w_0^{\Delta-2}$ and $w_1^1$ may coincide. 
Draw $H'$ inside $\BRU{w_0^\ell w_1^{r_1}}{1}{\Delta-2}{\tfrac{h}{\Delta-4}}$ using the induction hypothesis \ref{item:2-bubble} (see Figure \ref{fig:2-bubble}, right). 
By \ref{item:br-induction}, this bounding region is contained in $B_1$. 
Since $r_1\le\Delta-3$ and by \ref{item:br-composition}, we have $B_1\subset\BR{w_0w_n}{\ell}{r}$. 
This and the fact that $h\le\abs{w_0w_n}$ imply $B_1\subset\BRU{w_0w_n}{\ell}{r}{\tfrac{\Delta-3}{\Delta-4}\abs{w_0w_n}}$. 
To complete the drawing of $H$, apply the induction hypothesis \ref{item:2-bubble} to draw $Y$ inside $B_Y$. 
This way it lies entirely below $L$ and therefore does not overlap with the drawing of $H'$. 
By \ref{item:br-slopes}, it also lies to the right of the edge $w_1w_1^{r_1}$. 
Clearly, $B_Y\subset\BRU{w_0w_n}{\ell}{r}{\tfrac{\Delta-3}{\Delta-4}\abs{w_0w_n}}$. 
\end{case}

\begin{case}\label{case:2-bubble-ind}
\textit{The splitting sequence of\/ $H$ starts or ends with a v-bubble.}

The two cases are symmetric, so it is enough to consider only the case that the splitting sequence of $H$ starts with a v-bubble. 
Hence, let $(H_1,\ldots,H_b)$ be the splitting sequence of $H$, where $H_1$ is a v-bubble. 
Define $X=(H_2,\ldots,H_b)$, $r_1=\ell+d_{H_1}(u)-1$, and $\ell_X=\Delta-d_Y(u)$.
By the induction hypothesis \ref{item:1-bubble}, we can draw $H_1$ inside $\BR{u}{\ell}{r_1}$, and by the induction hypothesis \ref{item:2-bubble}, we can draw $X$ inside $\BRU{uv}{\ell_X}{r}{\tfrac{\Delta-3}{\Delta-4}\abs{uv}}$. 
We scale the drawing of $H_1$ to make it so small that it lies entirely below the horizontal lines determined by all the vertices of $X$ not lying on the root-path as well as below the horizontal line bounding from above the requested bounding region of $H$ (see Figure \ref{fig:2-bubble}). 
Since $r_1+1=\ell_X$ and by \ref{item:br-slopes}, our scaled drawing of $H_1$ lies to the left of the leftmost edge at the root $u$ of $X$. 
Thus the drawings of $H_1$ and $X$ do not overlap. 
By \ref{item:br-inclusion}, they both fit within $\BRU{uv}{\ell}{r}{\tfrac{\Delta-3}{\Delta-4}\abs{uv}}$. 
\end{case}
\end{enumeratedlemmaproofitem}

\begin{enumeratedlemmaproofitem}{item:k-bubble}
If $k=1$, then the claim follows directly from \ref{item:1-bubble} and \ref{item:br-inclusion} by scaling. 
Thus assume $k\ge 2$. 
There is a splitting of $H$ into $2$-bubbles $X_1,\ldots,X_{k-1}$ so that the splitting sequences of $X_1,\ldots,X_{k-1}$ together form the splitting sequence of $H$. 
In particular,
\begin{itemizedcases}
\itemcase{the roots of $X_i$ are $v_i$ and $v_{i+1}$}{for $i=1,\ldots,k-1$,}
\itemcase{$X_{i-1}\cap X_i=\set{v_i}$}{for $i=2,\ldots,k-1$.}
\end{itemizedcases}
Apply \ref{item:2-bubble} to draw each $X_i$ inside $\BRU{v_iv_{i+1}}{\Delta-d_{X_i}(v_i)}{d_{X_i}(v_{i+1})-1}{\tfrac{\Delta-3}{\Delta-4}\lambda}$. 
Consecutive bounding regions do not overlap by \ref{item:br-neighbors}, while non-consecutive ones are disjoint by \ref{item:br-nonneighbors}. 
By \ref{item:br-inclusion}, they are all contained in $\BRU{v_1v_k}{1}{\Delta-2}{\tfrac{\Delta-3}{\Delta-4}\lambda}$.\qedhere
\end{enumeratedlemmaproofitem}
\end{proof}

\begin{lemma}
Suppose\/ $\Delta=4$.
\begin{enumeratedlemma}
\item\label{item:1-bubble-Delta=4}
Let\/ $H$ be a\/ $1$-bubble with root\/ $v$ such that\/ $d_H(v)\le 3$. 
Suppose that the position of\/ $v$ is fixed. 
Let\/ $\ell$ and\/ $r$ be such that\/ $0\le\ell,r\le 3$ and\/ $r-\ell+1=d_H(v)$. 
It follows that there is a straight-line drawing of\/ $H$ inside\/ $\BR{v}{\ell}{r}$.
\item\label{item:2-bubble-Delta=4}
Let\/ $H$ be a\/ $2$-bubble with first root\/ $u$ and second root\/ $v$ such that\/ $d_H(u),d_H(v)\le 3$. 
Suppose that the positions of\/ $u$ and\/ $v$ are fixed on a horizontal line in this order from left to right. 
Let\/ $\ell=4-d_H(u)$ and\/ $r=d_H(v)-1$. 
It follows that there is a straight-line drawing of\/ $H$ inside\/ $\BR{uv}{\ell}{r}$ such that the root-path of\/ $H$ is drawn as the segment\/ $uv$.
\end{enumeratedlemma}
The drawings claimed above use only slopes from\/ $S$ and preserve the order of edges around each vertex\/ $w$ of\/ $H$ under the assumption that all edges connecting\/ $w$ to\/ $G-H$ (if exist) are drawn in the correct order outside the considered bounding region.
\end{lemma}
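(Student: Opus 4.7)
The plan is to mirror the inductive proof of Lemma~\ref{lem:Delta>=5}, proving \ref{item:1-bubble-Delta=4} and \ref{item:2-bubble-Delta=4} simultaneously by induction on the number of vertices of $H$. The substantive difference with the $\Delta\ge 5$ case is that the factor $\tfrac{\Delta-3}{\Delta-4}$ is undefined when $\Delta=4$, so the statement asks for containment only in the unbounded cone $\BR{uv}{\ell}{r}$, without any a priori bound on the height of the drawing. In compensation, the geometric tools \ref{item:br-induction}, \ref{item:br-composition} and \ref{item:br-nonneighbors} are no longer available. The freedom gained from dropping the height bound is what will replace them: every sub-drawing may be uniformly scaled as small as we wish without violating the inductive claim.

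For \ref{item:1-bubble-Delta=4} I would run exactly the same case analysis as in the proof of \ref{item:1-bubble}: $\ell>r$ is trivial; a single v-bubble with $1\le\ell\le r\le 2$ is reduced via \ref{item:v-bubble-without-a-root} either to a child 1-bubble (when $\ell=r$) or, when $\ell=1<r=2$, to a 2-bubble $H-v$ drawn between $v+\ff_1$ and $v+\ff_2$ by invoking \ref{item:2-bubble-Delta=4}; a v-bubble with $\ell=0$ or $r=3$ is handled by \ref{item:v-bubble-without-a-path}, peeling a path along the outer face and splitting $H$ into a smaller interior bubble plus a residual v-bubble at the far endpoint; and a bouquet of several v-bubbles is handled by peeling off the leftmost one, scaling it small enough to sit below the rest of the drawing, and recursing on the remainder. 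Non-overlap of sub-pieces in each case follows from \ref{item:br-slopes} and \ref{item:br-neighbors} together with the scaling freedom.

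For \ref{item:2-bubble-Delta=4} the case analysis parallels the four cases in the proof of \ref{item:2-bubble}. The trivial-edge, single non-trivial e-bubble, and leading-or-trailing v-bubble cases carry over almost verbatim: we place the auxiliary children of $u$ and $v$ at height-$h$ copies of the $\ff_i$'s for an $h$ we are free to choose, invoke the induction hypothesis on the interior bubble drawn as a 2-bubble between the outermost children, and use \ref{item:br-slopes}, \ref{item:br-neighbors}, \ref{item:br-inclusion} to check non-overlap. The remaining case is a composition of two or more non-trivial e-bubbles, which in the $\Delta\ge 5$ proof (Case~\ref{case:composition}) relied crucially on \ref{item:br-composition}, and this is where the genuine work lies.

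The main obstacle is therefore the composition case. The plan is to fix the first intermediate root $w_1$ on the segment $w_0w_n=uv$ and draw the leading e-bubble $H_1$ inductively inside $\BR{w_0w_1}{\ell}{r_1}$ and the trailing 2-bubble $Y$ inside $\BR{w_1w_n}{\ell_Y}{r}$. Separation of the two pieces across $w_1$ follows from \ref{item:br-neighbors} because $r_1+1\le\ell_Y$, equivalently $d_{H_1}(w_1)+d_Y(w_1)\le d_H(w_1)\le 4$. Containment in $\BR{w_0w_n}{\ell}{r}$ is the delicate point: depending on $(\ell,r_1,\ell_Y,r)$ the right boundary of $\BR{w_0w_1}{\ell}{r_1}$ may extend further to the right at large heights than that of $\BR{w_0w_n}{\ell}{r}$, so the former is \emph{not} contained in the latter as a set. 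However, the two right rays emanate from $w_1$ and $w_n$ respectively, both going upward, so there is a positive threshold height---linear in $\abs{w_1w_n}$---below which the right ray of $\BR{w_0w_1}{\ell}{r_1}$ still lies inside $\BR{w_0w_n}{\ell}{r}$. Since the inductive drawing of $H_1$ can be taken with height proportional to $\abs{w_0w_1}$ by uniform scaling, choosing $w_1$ sufficiently close to $w_0$ and symmetrically scaling $Y$ on the right makes both sub-drawings fit. No analogue of \ref{item:k-bubble} is needed: a would-be $k$-bubble is decomposed into 2-bubbles and stacked one after another along a common horizontal line as in the proof of \ref{item:k-bubble}, with the role of \ref{item:br-nonneighbors} played again by scaling each piece small enough.
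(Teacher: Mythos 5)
Your overall plan (same induction, same case split as for $\Delta\ge 5$, no $k$-bubble clause) matches the paper, but the mechanism you propose for replacing \ref{item:k-bubble} does not work, and this is exactly where the $\Delta=4$ difficulty sits. When you decompose the interior bubble $H'$ (in the path-peeling subcase of \ref{item:1-bubble-Delta=4}, or in the non-trivial e-bubble case of \ref{item:2-bubble-Delta=4}) into $2$-bubbles between consecutive roots, both roots of each piece are at \emph{fixed, prescribed} positions ($w_i+h\ff_j$), because the edges $w_iw_i^j$ must use slopes from $S$ and the induction requires the root-path to be drawn as the segment between the two given roots. So ``scaling each piece small enough'' is not available, and since for $\Delta=4$ the regions $\BR{u_iv_i}{\ell_i}{r_i}$ are unbounded in height, you have no control over how tall each piece is. What actually saves the day in the paper is different: (a) degree counting at the interior roots gives $\ell_i\ge 2$ and $r_i\le 1$, so for $\Delta=4$ those regions are vertical strips over disjoint segments and are disjoint \emph{for free} (no analogue of \ref{item:br-nonneighbors} is needed at all); and (b) only the first and last pieces can protrude out of the outer region ($\LBR{w_0}{\ell}$ when $\ell=2$, resp.\ the right side), and the construction is deliberately perturbed---spacings $\lambda_i\in\{1,1+\epsilon\}$ in the path case, $h=\abs{uv}-\epsilon$ in the e-bubble case---so that these end pieces span segments of length $\epsilon$; their inductive drawings then scale with $\epsilon$ and stay inside for $\epsilon$ small. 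Your write-up contains neither the $\epsilon$-perturbation of the spacings nor the identification of which pieces can protrude, and the phrase ``invoke the induction hypothesis on the interior bubble drawn as a 2-bubble between the outermost children'' is not legitimate, since that interior bubble generally has more than two roots, whose positions are forced.

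Conversely, the case you single out as ``where the genuine work lies'' (composition of several non-trivial e-bubbles) is in fact the easy one for $\Delta=4$: since $d_H(w_1)\le 4$ and both $H_1$ and $Y$ give $w_1$ degree at least $2$, one gets $r_1\le 1$ and $\ell_Y\ge 2$, so $\BR{w_0w_1}{\ell}{r_1}$ and $\BR{w_1w_n}{\ell_Y}{r}$ are separated at $w_1$ by \ref{item:br-neighbors} and both lie inside $\BR{uv}{\ell}{r}$ by \ref{item:br-inclusion}; no threshold-height argument is needed, and ``symmetrically scaling $Y$ on the right'' is not even possible because both roots of $Y$ are fixed. (The paper simply splits $H$ along the whole root-path into $X_1,\ldots,X_n$ with $\ell_i=2$, $r_i=1$ in the middle.) So the proposal misplaces the difficulty and, where the real difficulty lies, relies on a scaling step that is not available; as written it has a genuine gap.
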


\noindent Note that \ref{item:2-bubble-Delta=4} differs from \ref{item:2-bubble} in that the bounding region of $H$ is unbounded from above, and in fact no such bound independent of the size of $H$ is possible. 
This is why the case of $\Delta=4$ needs to be dealt with separately. 

\begin{proof}
The proof, like for Lemma \ref{lem:Delta>=5}, constructs the required drawing by induction on the size of $H$. 
That is, to prove either of \ref{item:1-bubble-Delta=4} and \ref{item:2-bubble-Delta=4} for a bubble $H$, we assume that the entire lemma holds for any bubble with fewer vertices than $H$ has. 
We proceed along the same lines as in the proof of Lemma \ref{lem:Delta>=5}, focusing only on those details in which the two proofs differ. 

\begin{enumeratedlemmaproofitem}{item:1-bubble-Delta=4}
We consider the same cases as in the proof of \ref{item:1-bubble}. 

\begin{case}
\textit{$\ell>r$.}

As in the proof of \ref{item:1-bubble}, the statement is trivial in this case.
\end{case}

\begin{case}
\textit{$H$ is a v-bubble and\/ $1\le\ell\le r\le 2$.}

\begin{subcase}
\textit{$1\le\ell=r\le 2$.}

This is handled the same way as in Subcase \ref{subcase:v-bubble-simple-l=r} of the proof of \ref{item:1-bubble}.
\end{subcase}

\begin{subcase}
\textit{$\ell=1$ and\/ $r=2$.}

Define $H'=H-v$, and let $v^1$ and $v^2$ be the left and right neighbors of $v$ in $H$, respectively. 
As in Subcase \ref{subcase:v-bubble-simple-l<r} of the proof of \ref{item:1-bubble}, $H'$ is a $2$-bubble with roots $v^1$ and $v^2$. 
Draw $H'$ inside $\BR{v^1v^2}{1}{2}$ using the induction hypothesis \ref{item:2-bubble-Delta=4}. 
This bounding region is clearly contained in $\BR{v}{1}{2}$. 
\end{subcase}
\end{case}

\begin{case}
\textit{$H$ is a v-bubble, $0\le\ell\le r\le 3$, and\/ $\ell=0$ or\/ $r=3$.}

As $d_H(v)\le 3$, the cases $\ell=0$ and $r=3$ cannot hold simultaneously. 
Therefore, by symmetry, it is enough to consider only the case that $1\le\ell\le r=3$. 
Consider two subcases of the latter. 

\begin{subcase}
\textit{$\ell=r=3$.}

This is handled the same way as in Subcase \ref{subcase:v-bubble-complex-l=r} of the proof of \ref{item:1-bubble}.
\end{subcase}

\begin{subcase}\label{subcase:Delta=4}
\textit{$1\le\ell<r=3$.}

Define path $w_0\ldots w_n$, bubble $H'$, index $r_X$, and vertices $w_i^j$ like in Subcase \ref{subcase:v-bubble-complex-l<r} of the proof of \ref{item:1-bubble}. 
For $i=0,\ldots,n-1$, define
\begin{equation*}
\lambda_i=\begin{cases}
1&\text{if $w_i^2=w_{i+1}^1$},\\
1+\epsilon&\text{if $w_i^2\ne w_{i+1}^1$},
\end{cases}
\end{equation*}
for a small $\epsilon>0$. 
Draw vertices $w_i$ and $w_i^j$ like in Subcase \ref{subcase:v-bubble-complex-l<r} of the proof of \ref{item:1-bubble} but with the new definition of $\lambda_i$. 
All $w_i^j$ lie on a common horizontal line, and moreover the segments $w_i^2w_{i+1}^1$ (for $w_i^2\ne w_{i+1}^1$) have length $\epsilon$. 
Define $B_X=\BR{w_0w_n}{\ell}{r_X}$. 
To obtain the required drawing of $H$, it suffices to draw $H'$ inside $B_X$, and then the remaining part of $H$ can be drawn like in Subcase \ref{subcase:v-bubble-complex-l<r} of the proof of \ref{item:1-bubble}. 
But here the drawing of $H'$ inside $B_X$ is more tricky. 

If $H'$ is a $1$-bubble, then $n=1$, $w_0^2=w_1^1$, and thus we can draw $H'$ inside $\BR{w_0^2}{1}{2}$ by the induction hypothesis \ref{item:1-bubble-Delta=4}, scaling the drawing appropriately to fit it within $B_X$. 
Thus we assume that $H'$ is a $k$-bubble with $k\ge 2$. 
We split $H'$ into $2$-bubbles $X_1,\ldots,X_{k-1}$ so that the splitting sequences of $X_1,\ldots,X_{k-1}$ together form the splitting sequence of $H'$. 
The roots of $X_1,\ldots,X_{k-1}$ are pairs of vertices consecutive in the sequence of all $w_i^j$. 
Define $\ell_i=4-d_{X_i}(u_i)$ and $r_i=d_{X_i}(v_i)$, where $u_i$ and $v_i$ denote respectively the first and the last root of $X_i$. 
We draw each $X_i$ inside $\BR{u_iv_i}{\ell_i}{r_i}$ using the induction hypothesis \ref{item:2-bubble-Delta=4}. 
Since each root of $H'$ has degree at most $3$ in $H'$, we have $\ell_i\ge 2$ for $2\le i\le k-1$ and $r_i\le 1$ for $1\le i\le k-2$. 
Thus the bounding regions for $X_i$ do not overlap. 
Moreover, for $2\le i\le k-2$, we clearly have $\BR{u_iv_i}{\ell_i}{r_i}\subset B_X$. 
Thus to complete the proof for the case of $H$ being a v-bubble, it remains to show that the drawings of $X_1$ and $X_{k-1}$ are contained in $B_X$. 
We do not necessarily have $\BR{u_1v_1}{\ell_1}{r_1}\subset B_X$. 
However, this inclusion may not hold only if $\ell=2$ and $w_0^2\ne w_1^1$. 
In this case we have $\abs{u_1v_1}=\epsilon$ and thus the drawing of $X_1$ indeed lies within $B_X$ provided that $\epsilon$ is small enough. 
Similarly, the drawing of $X_{k-1}$ in contained in $B_X$ for $\epsilon$ small enough. 
\end{subcase}
\end{case}

\begin{case}
\textit{$H$ consists of at least two v-bubbles.}

This is handled the same way as in Case \ref{case:1-bubble-ind} of the proof of \ref{item:1-bubble}.
\end{case}
\end{enumeratedlemmaproofitem}

\begin{enumeratedlemmaproofitem}{item:2-bubble-Delta=4}
Again, we consider the same cases as in the proof of \ref{item:2-bubble}. 

\begin{case}
\textit{The splitting sequence of\/ $H$ contains a trivial single-edge e-bubble.}

This is handled the same way as in Case \ref{case:bridge} of the proof of \ref{item:2-bubble}. 
The only difference is that here the bounding regions of $X$ and $Y$ (defined as in Case 1 of the proof of \ref{item:2-bubble}) obtained from the application of the induction hypothesis \ref{item:2-bubble-Delta=4} are not bounded from above, so we need to choose $\lambda$ small enough for the drawings of $X$ and $Y$ not to overlap and to fit within $\BR{uv}{\ell}{r}$. 
\end{case}

\begin{case}
\textit{$H$ is a non-trivial e-bubble.}

It follows that $d_H(u),d_H(v)\ge 2$ and therefore $\ell\le 2$ and $r\ge 1$. 
Define vertices $u^j$, $v^j$ and bubble $H'$ like in Case \ref{case:e-bubble} of the proof of \ref{item:2-bubble}. 
Define 
\begin{equation*}
h=\begin{cases}
\abs{uv}&\text{if $u^2=v^1$},\\
\abs{uv}-\epsilon&\text{if $u^2\ne v^1$},
\end{cases}
\end{equation*}
for a small $\epsilon>0$. 
Put each vertex $u^i$ at point $u+h\ff_i$ and each vertex $v^i$ at point $v+h\ff_i$, so that if $u^2$ and $v^1$ are the same vertex, then they correctly end up at the same point. 
All $u^i$ and $v^i$ lie on a common horizontal line, and moreover the segment $u^2v^1$ (if exists) has length $\epsilon$. 
The same argument as in Subcase \ref{subcase:Delta=4} above shows that $H'$ can be drawn inside $\BR{u^\ell v^r}{1}{2}$. 
\end{case}

\begin{case}
\textit{The splitting sequence of\/ $H$ contains at least two e-bubbles but no trivial e-bubbles.}

Let $(H_1,\ldots,H_b)$ be the splitting sequence of $H$ and $w_0\ldots w_n$ be the root-path of $H$. 
Thus $b\ge 2$, $w_0=u$, $w_n=v$, and $n\ge 2$. 
Since none of the edges $w_0w_1,\ldots,w_{n-1}w_n$ is a bridge in $H$. 
We split $H$ into $2$-bubbles $X_1,\ldots,X_n$ so that the roots of $X_i$ are $w_{i-1}$ and $w_i$ and the splitting sequences of $X_1,\ldots,X_n$ together form the splitting sequence $(H_1,\ldots,H_b)$ of $H$. 
Define $\ell_i=4-d_{X_i}(w_{i-1})$ and $r_i=d_{X_i}(w_i)$. 
We draw each $X_i$ inside $\BR{w_{i-1}w_i}{\ell_i}{r_i}$ using the induction hypothesis \ref{item:2-bubble-Delta=4}. 
Since none of $X_1,\ldots,X_n$ is a trivial e-bubble, we have $\ell_1=\ell\le 2$, $\ell_i=2$ for $2\le i\le n$, $r_i=1$ for $1\le i\le n-1$, and $r_n=r\ge 1$. 
Thus the bounding regions for $X_i$ do not overlap and are contained in $\BR{w_0w_n}{\ell}{r}$. 
\end{case}

\begin{case}
\textit{The splitting sequence of\/ $H$ starts or ends with a v-bubble.}

This is handled the same way as in Case \ref{case:2-bubble-ind} of the proof of \ref{item:2-bubble}.\qedhere
\end{case}
\end{enumeratedlemmaproofitem}
\end{proof}

Now, to prove the Main Theorem, pick any vertex $v$ of $G$ of degree at most $\Delta-1$ (such a vertex always exists in an outerplanar graph), fix its position in the plane, and apply \ref{item:1-bubble} or \ref{item:1-bubble-Delta=4} to the graph $G$ considered as a $1$-bubble with root $v$.

\section*{Acknowledgments}

We thank V{\'{\i}}t Jel{\'{\i}}nek and D{\"o}m{\"o}t{\"o}r P{\'a}lv{\"o}lgyi for introducing us to the problem at the meeting in Prague in the summer of 2011. 
We also thank anonymous reviewers whose comments significantly contributed to improving the quality of the paper.

\bibliography{slopebib}
\bibliographystyle{plain}

\end{document}